\documentclass{article}
\usepackage[final]{./style/neurips_2024}  

\usepackage[utf8]{inputenc} 
\usepackage{amsmath}        
\usepackage{graphicx}       
\usepackage[colorlinks=true,allcolors=teal]{hyperref}
\usepackage{xcolor} 
\usepackage{amssymb}  
\usepackage{amsthm}  
\usepackage{booktabs} 
\usepackage{tikz}
\usepackage{subcaption}

\usepackage[capitalize,noabbrev,nameinlink]{cleveref}
\newtheorem{theorem}{Theorem}[section]

\newtheorem{lemma}[theorem]{Lemma}
\newtheorem{definition}{Definition}[section]
\newtheorem{claim}[theorem]{Claim}

\newtheorem{assumption}{Assumption}[section]

\crefname{algocf}{Algorithm}{algorithms}
\Crefname{algocf}{Algorithm}{Algorithms}

\DeclareMathOperator*{\argmax}{arg\,max}
\DeclareRobustCommand\iff{\;\Longleftrightarrow\;}

\newcommand{\eps}{\varepsilon}

\newcommand{\Good}{\mathrm{Good}}
\newcommand{\Bad}{\mathrm{Bad}}
\newcommand{\Active}{\mathrm{Active}}
\newcommand{\Passive}{\mathrm{Passive}}
\newcommand{\threshold}{\tau}

\newcommand{\bias}{\phi}
\newcommand{\ext}{\mathrm{ext}}
\usepackage[ruled]{algorithm2e} 

\usepackage{mathtools}
\usepackage{amsfonts}
\usepackage{bm, bbm}
\usepackage{cases}
\usepackage{enumerate}
\usepackage{enumitem}

\usepackage[T1]{fontenc}
\usepackage{tikz,tikz-3dplot}
\usepackage{calc}
\newcommand{\tikzscale}{0.45}

\newcommand{\revision}[1]{\textcolor{black}{#1}}
\newcount\Comments  
\Comments=1 
\usepackage{color} 
\newcommand{\kibitz}[2]{\ifnum\Comments=1{\color{#1}{#2}}\fi \ignorespaces}

\newcommand{\x}{\mathbf{x}}
\newcommand{\emdash}{\,---\,}

\newcommand{\squishlist}{
   \begin{list}{$\bullet$}
    { \setlength{\itemsep}{0pt}      \setlength{\parsep}{3pt}
      \setlength{\topsep}{3pt}       \setlength{\partopsep}{0pt}
      \setlength{\leftmargin}{1.5em} \setlength{\labelwidth}{1em}
      \setlength{\labelsep}{0.5em} } }
\newcommand{\squishend}{  \end{list}  }
\title{Bias Detection via Signaling}
\author{%
  Yiling Chen \\
  Harvard University \\
  \texttt{\small yiling@seas.harvard.edu} \\
  \And
  Tao Lin \\
  Harvard University \\
  \texttt{\small tlin@g.harvard.edu} \\
  \And
  Ariel D. Procaccia \\
  Harvard University \\
  \texttt{\small arielpro@g.harvard.edu} \\
  \And
  Aaditya Ramdas \\
  Carnegie Mellon University \\
  \texttt{\small aramdas@cmu.edu} \\
  \And
  Itai Shapira \\
  Harvard University \\
  \texttt{\small itaishapira@g.harvard.edu} \\
}

\begin{document}

\maketitle

\begin{abstract}
We introduce and study the problem of detecting whether an agent is updating their prior beliefs given new evidence in an optimal way that is Bayesian, or whether they are biased towards their own prior. In our model, biased agents form posterior beliefs that are a convex combination of their prior and the Bayesian posterior, where the more biased an agent is, the closer their posterior is to the prior. Since we often cannot observe the agent's beliefs directly, we take an approach inspired by \emph{information design}. Specifically, we measure an agent's bias by designing a \emph{signaling scheme} and observing the actions the agent takes in response to different signals, assuming that the agent maximizes their own expected utility. Our goal is to detect bias with a minimum number of signals. Our main results include a characterization of scenarios where a single signal suffices and a computationally efficient algorithm to compute optimal signaling schemes. 
\end{abstract}

\section{Introduction}
\label{sec:introduction}

A bag contains two coins that look and feel identical, but one is a fair coin that, on a flip, comes up heads with probability 0.5, and the other is an unfair coin with probability 0.9 of heads. You reach into the bag, grab one of the coins and flip it once; it lands on heads. Since you are (hopefully) familiar with Bayes' rule, you conclude that the probability you are holding the fair coin is $\approx 0.36$. Now suppose you are offered the following deal: if you pay \$1, you get to flip the same coin again, and if it comes up heads, you will receive \$1.4. Since you now believe that the probability of heads is 0.76, you take the deal (assuming you are risk neutral) and earn 6 cents in expectation. 

If, by contrast, another risk-neutral person in the same situation decides to decline the same deal, they must believe that the probability they are holding the fair coin is greater than 0.47. That is, their belief is still very close to the prior of 0.5. We think of such a person as being \emph{biased}, in the sense that they are unwilling to significantly update their beliefs, despite evidence to the contrary.

Of course, failing to update one's beliefs about coin flips is not the end of the world. But this example serves to illustrate a broader phenomenon that, in our view, is both important and ubiquitous. In particular, the ``stickiness'' of prior beliefs in the face of evidence plays a role in politics\emdash think of the controversy over Russian collusion in the 2016 US presidential election or the existence of weapons of mass destruction in Iraq in 2003. It is also prevalent in science, as exemplified by the polarized debate over the origins of the Covid pandemic~\cite{science}.  

Our goal in this paper is to develop algorithms that are able to \emph{detect} bias in the form of non-Bayesian updating of beliefs. To our knowledge, we are the first to formalize and analytically address this problem, and we aim to build an initial framework that future work would build on. In the long term, we believe such algorithms could have many applications, including understanding to what degree the foregoing type of bias contributes to disagreement and polarization, and discounting the opinions of biased agents to improve collective decision making.

\textbf{Our approach.}
The first question we need to answer is how to \emph{quantify} bias. In this first investigation, we adopt a linear model of bias that was proposed and used as a general belief updating model in economics \cite{Epstein2010, Hagmann2017, de_clippel_non-bayesian_2022, tang_bayesian_2021}.  If the prior is $\mu_0$ and the correct Bayesian posterior upon receiving a \emph{signal} (or evidence) $s$ is denoted $\mu_s$, we posit that an agent with bias $w\in [0,1]$ adopts the belief $w\mu_0 + (1-w)\mu_s$. At the extremes, an agent with bias $w=0$ performs perfect Bayesian updating and an agent with bias $w=1$ cannot be convinced to budge from the prior. 

The bigger conceptual question is how we can infer an agent's bias. To address it, we take an approach that is inspired by the literature on \emph{information design} \cite{kamenica_bayesian_2011}. In our context, suppose that we (the \emph{principal}) and the agent have asymmetric information: while both share a common (say public) prior about the state of the world, the principal knows the true (realized) state of the world, but the agent does not. The \emph{principal} publicly commits to a (randomized) \emph{signaling scheme} that specifies the probability of sending each possible signal given each possible realized state of the world. Given their knowledge of the latter, the principal draws a signal from the specified distribution and sends it to the agent. Upon receiving such a signal, the agent updates their beliefs about the state of the world (from the common prior) and then takes an \emph{action} that maximizes their expected utility according to a given utility function. Similarly to the example we started with, it is the action taken by the agent that can (indirectly) reveal their degree of bias. 

\revision{Note that the problem of estimating the exact level of bias reduces to the problem of detecting whether the agent's bias is above or below some threshold. Indeed, to estimate the level of bias to an accuracy of $\epsilon$, $\log(1/\epsilon)$ such threshold queries suffice by using binary search. The challenge, then, is to design signaling schemes that test whether bias is above or below a given threshold in the most efficient way, that is, using a minimum number of signals in expectation.}

\textbf{Our results.}
We design a polynomial-time algorithm that computes optimal signaling schemes, in \cref{sec:optimal_signaling_scheme}. We first show that \emph{constant} algorithms, which repeatedly use the same signaling scheme, are as powerful as \emph{adaptive} algorithms, which can vary the scheme over time based on historical data (\cref{lemma:constant_is_optimal}); we can therefore restrict our attention to constant algorithms. In \cref{lem:revelation-principle}, we establish a version of the \emph{revelation principle} for the bias detection problem, which asserts that optimal signaling schemes need only use signals that can be interpreted as action recommendations. Finally, building on these insights, we show that the optimal solution to our problem is obtained by solving a ``small'' linear program (\cref{alg:sample-complexity-LP} and \cref{thm:linear-program}). 

In \cref{sec:geo_char}, we present a geometric characterization of optimal signaling schemes (\cref{thm:geometric-characterization}), which sheds additional light on the performance of the algorithm. In particular, the characterization provides sufficient and necessary conditions for the testability of bias, and also identifies cases where only a single sample is needed for this task.

\textbf{Related work.}
There is a significant body of \emph{experimental} work in the social sciences aiming to explain the failure of partisans to reach similar beliefs on factual questions where there is a large amount of publicly available evidence. The fact that biased belief updating occurs is undisputed (to our knowledge), and the focus is on understanding the factors that play a role. In particular, a prominent line of work supports the (perhaps counterintuitive) hypothesis that the more cognitively sophisticated a partisan is, the more politically biased is their belief update process~\citep{TL06,TCK09,KPWS+12,Kahan13}. These results are challenged by more recent work by \citet{TPR20}, who found that greater analytical thinking is associated with belief updates that are less biased, using an experimental design that explicitly measures the proximity of belief updates to a correct Bayesian posterior. While these studies provide empirical underpinnings for our theoretical model, their research questions are orthogonal to ours: we aim to measure the magnitude of bias regardless of its source. 


Classical work  in \emph{information design} \cite{crawford_strategic_1982, kamenica_bayesian_2011} studies how a principal can strategically provide information to induce an agent to take actions that are beneficial for the principal, assuming a perfectly Bayesian agent.
Various relaxations of the perfectly Bayesian assumption have been investigated \cite{alonso_bayesian_2016, Hagmann2017, dworczak_preparing_2022, de_clippel_non-bayesian_2022, feng2024rationality, yang2024computational, lin2024persuading}.
The work by \citet{de_clippel_non-bayesian_2022} is close to us, which studies biased belief update models including the linear model. 
However, their goal is to maximize the principal's utility with the agent's bias fully known. 
In our problem the agent's bias level is unknown, and the principal's goal is to infer the agent's bias level instead of maximizing their own utility. \revision{\citet{tang_bayesian_2021} present real-world experiments showing that human belief updates are close to a linear bias model, which supports our theoretical assumption. }

\section{Model} \label{sec:model}

\textbf{Biased agent.}
Consider a standard Bayesian setting: the relevant \emph{state of the world} is $\theta \in \Theta$, distributed according to some known prior distribution $\mu_0$. If an agent were perfectly Bayesian, when receiving some new information (``signal'') $s$ and with the knowledge of the conditional distributions $P(s|\theta)$ for all $\theta$, they would update their belief about the state of the world according to Bayes' Rule: $\mu_s (\theta) = P(\theta|s) = \frac{\mu_0(\theta)P(s|\theta)}{P(s)}$. We refer to $\mu_s$ as the true posterior belief induced by $s$. Being biased, the agent's belief after seeing $s$, denoted \( \nu_s \), is a convex combination of $\mu_s$ and $\mu_0$:
\begin{equation*}
    \nu_s = w \mu_0 + (1-w) \mu_s,
\end{equation*}  
where $w \in [0, 1]$ is the unknown \emph{bias level}, capturing the agent's inclination to retain their prior belief in the presence of new information. This linear model was proposed and adopted in economics for non-Bayesian belief updating~\cite{Epstein2010, Hagmann2017}, in order to capture people's conservatism in processing new information and their tendency to protect their beliefs~\cite{Ward2007}.

The agent can choose an action from a finite set \( A \) and has a state-dependent utility function \( U: A \times \Theta \to \mathbb{R} \). They receive utility \( U(a, \theta) \) when taking action $a$ in state \(\theta\). The agent will act according to their (biased) belief $\nu_s$ and choose an action $a$ that maximizes their expected utility: 
\begin{equation*}
    a \in \argmax_{a\in A} \, \mathbb{E}_{\theta \sim \nu_s}[ U(a, \theta) ] = \argmax_{a\in A} \, \sum_{\theta \in \Theta} \nu_s(\theta) U(a, \theta). 
\end{equation*}

In the absence of any additional information, the agent operates based on the prior belief \(\mu_0\) and will select an action deemed optimal with respect to \(\mu_0\). We introduce the following mild assumption to ensure the uniqueness of this action: 

\begin{assumption}\label{assump:unique-default-action}
There is a unique action that maximizes the expected utility based on the prior belief $\mu_0$: $|\argmax_{a \in A} \{\sum_{\theta \in \Theta} \mu_0(\theta) U(a, \theta)\}| = 1$. 
\end{assumption}

This assumption will be made throughout the paper. We denote the unique optimal action on the prior belief as $a_0 = \argmax_{a \in A} \{\sum_{\theta \in \Theta} \mu_0(\theta) U(a, \theta)\}$, and call it the \emph{default action}. 

\textbf{Bias detection.}
The principal, who knows the prior $\mu_0$ and the agent's utility function $U$, seeks to infer the agent's bias level from their action as efficiently as possible. The principal has an informational advantage\,---\,they observe the independent realizations of the state of the world at each time step. In other words, the principal knows $\theta_t$, an independent sample drawn according to $\mu_0$ at time $t$. The principal wants to design signaling schemes to strategically reveal information about $\theta_t$ to the agent, hoping to influence the agent's biased belief in a way that the agent's chosen actions reveal information about their bias level. Specifically, with a finite signal space $S$, the principal can commit to a \emph{signaling scheme} \( \pi_t: \Theta \to \Delta(S) \) at time $t$, where \( \pi_t(s | \theta) \) specifies the probability of sending signal \( s \) in state \( \theta \) at time $t$. After seeing a signal $s_t$, drawn according to $\pi_t(s|\theta_t)$ at time $t$, the agent takes action $a_t$ that is optimal for their biased belief $\nu_{s_t}$. The principal infers information about bias $w$ from the history of signaling schemes, realized states, realized signals, and agent actions $\mathcal{H}_t=\{(\pi_1, \theta_1, s_1, a_1), \ldots, (\pi_{t}, \theta_{t}, s_{t}, a_{t})\}$. We denote by $\Pi$ an adaptive algorithm that the principal uses to decide on the signaling scheme at time $t+1$ based on history $\mathcal{H}_t$. 

Given a threshold \(\tau \in (0, 1)\), the principal wants to design $\Pi$ to answer the following question: 
\begin{itemize}
\item[] \emph{Is the agent's bias level \( w \) greater than or equal to \(\tau\) or less than or equal to \(\tau\)}?\footnote{One may want to test $w \ge \tau$ or $w < \tau$ instead.  But this requires assumptions on tie-breaking when the agent has multiple optimal actions.  Indifference at $w=\tau$ allows us to avoid such assumptions.}
\end{itemize}
\revision{As noted earlier, by answering the above threshold question, one can also estimate the bias level \( w \) within accuracy \( \epsilon \) by performing \(\log(1/\epsilon)\) iterations of binary search. This effectively reduces the broader task of estimating \( w \) to a sequence of targeted threshold checks. By employing an adaptive signaling scheme, this approach lets us approximate \( w \) to any desired precision, providing an efficient solution to the bias estimation problem.}

An algorithm $\Pi$ for the above question terminates as soon as it can output a deterministic answer. The number of time steps for $\Pi$ to terminate, denoted by \( T_\tau(\Pi, w) \), is a random variable. The sample complexity of \(\Pi\) is defined to be the expected termination time in the worst case over $w \in [0, 1]$:
\begin{definition}[sample complexity]
The (worst-case) sample complexity of \(\Pi\) is defined as\footnote{Taking the worst case over $w \in [0, 1]$ is not overly pessimistic.  As we will show in the proofs, the worst case in fact happens at $w \in [\tau-\eps, \tau+\eps]$ for some $\eps > 0$, which makes intuitive sense. Therefore, the sample complexity can be equivalently defined as $T_\tau(\Pi) = \max_{w \in [\tau-\eps, \tau+\eps]} \mathbb{E}[T_\tau(\Pi, w)]$.} 
\begin{equation*}
T_\tau(\Pi) = \max_{w \in [0, 1]} \mathbb{E}[T_\tau(\Pi, w)]. 
\end{equation*}
\end{definition}

Our objective is to develop an algorithm \(\Pi\) that can determine whether \( w \ge \tau \) or \( w \le \tau \) with minimal sample complexity. 
Specifically, we want to solve the following minimax problem: 
\begin{equation*}
\min_{\Pi} \max_{w \in [0, 1]} \mathbb{E}[T_\tau(\Pi, w)].
\end{equation*}

We say that an algorithm $\Pi$ is \textit{constant} if it keeps using the same signaling scheme repeatedly until termination.  Constant algorithms are a special case of \textit{non-adaptive} algorithms, which may vary the signaling schemes over time but remain independent of historical data.

\textbf{Preliminaries.}
We now introduce the well-known splitting lemma from the information design literature~\cite{aumann_repeated_1995, kamenica_bayesian_2011, meyers_repeated_2012}. It relates a signaling scheme with a set of induced true posteriors for a Bayesian agent and a distribution over the set of true posteriors.

\begin{lemma}[Splitting Lemma, e.g., \cite{kamenica_bayesian_2011}]\label{lem:splitting}
Let $\pi$ be a signaling scheme where each signal \(s \in S\) is sent with unconditional probability \( \pi(s) = \sum_{\theta \in \Theta} \mu_0(\theta) \pi(s|\theta)\) and induces true posterior \(\mu_s\). Then, the prior \(\mu_0\) equals the convex combination of $\{ \mu_s \}_{s \in S}$ with weights $\{\pi(s)\}_{s\in S}$: $\mu_0 = \sum_{s \in S} \pi(s) \mu_s$.
Conversely, if the prior can be expressed as a convex combination of distributions \(\mu'_s \in \Delta(\Theta)\): \( \mu_0 = \sum_{s \in S} p_s \mu'_s\), where $p_s\ge 0, \sum_{s\in S} p_s = 1$, then there exists a signaling scheme \(\pi\) where each signal \(s\) is sent with unconditional probability \( \pi(s) = p_s\) and induces posterior \(\mu'_s\).
\end{lemma}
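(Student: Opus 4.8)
The plan is to prove both directions by direct computation with Bayes' rule. For the forward direction, I would start from the definition of the induced true posterior, $\mu_s(\theta) = \mu_0(\theta)\pi(s|\theta)/\pi(s)$, rearrange it to $\pi(s)\mu_s(\theta) = \mu_0(\theta)\pi(s|\theta)$, and then sum over $s \in S$. Since $\pi(\cdot|\theta)$ is a probability distribution on $S$ for each fixed $\theta$, the right-hand side collapses to $\mu_0(\theta)$, yielding $\sum_{s\in S} \pi(s)\mu_s(\theta) = \mu_0(\theta)$ for every $\theta \in \Theta$, which is exactly the claimed identity. Along the way I would also record that $\sum_{s\in S} \pi(s) = \sum_{s\in S}\sum_{\theta\in\Theta}\mu_0(\theta)\pi(s|\theta) = 1$, so the weights $\{\pi(s)\}_{s\in S}$ genuinely form a distribution and the right-hand side is a bona fide convex combination.

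For the converse, given $\mu_0 = \sum_{s\in S} p_s \mu'_s$ with $p_s \ge 0$ and $\sum_{s\in S} p_s = 1$, I would construct the signaling scheme explicitly by setting $\pi(s|\theta) = p_s\,\mu'_s(\theta)/\mu_0(\theta)$ whenever $\mu_0(\theta) > 0$, and then verify three properties: (i) for each such $\theta$, $\sum_{s\in S}\pi(s|\theta) = \frac{1}{\mu_0(\theta)}\sum_{s\in S} p_s\mu'_s(\theta) = 1$ and each summand is nonnegative, so $\pi(\cdot|\theta)\in\Delta(S)$; (ii) the unconditional probability of $s$ is $\sum_{\theta\in\Theta}\mu_0(\theta)\pi(s|\theta) = p_s\sum_{\theta\in\Theta}\mu'_s(\theta) = p_s$; and (iii) the posterior induced by $s$ is $\mu_0(\theta)\pi(s|\theta)/p_s = \mu'_s(\theta)$, as required. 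These are all one-line algebraic checks.

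The only genuine subtlety — and the step I would be most careful about — is the handling of degenerate states and signals. When $\mu_0(\theta) = 0$, the formula for $\pi(s|\theta)$ divides by zero; but such $\theta$ carry zero prior probability, so $\pi(\cdot|\theta)$ may be set to any fixed distribution on $S$ without affecting any of the quantities above, and moreover $\mu_0(\theta) = \sum_{s\in S} p_s\mu'_s(\theta) = 0$ together with nonnegativity forces $p_s\mu'_s(\theta) = 0$ for all $s$, so nothing is lost. When $p_s = 0$, the signal $s$ is simply never sent and its induced posterior is immaterial. With these conventions in place, the computations in the two previous paragraphs go through verbatim, completing the proof.
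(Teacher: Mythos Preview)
Your proof is correct and is the standard argument for the splitting lemma. The paper does not supply its own proof of this lemma; it is stated as a known preliminary result with a citation to \cite{kamenica_bayesian_2011}, so there is nothing to compare against.
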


The splitting lemma is also referred as the Bayesian consistency condition. It allows one to think about choosing a signaling scheme as choosing a set of true posteriors, $\{\mu_s\}_{s\in S}$, and a distribution over the set, $\{\pi(s)\}_{s\in S}$, in a Bayesian consistent way.

\section{Warm-Up: A Two-State, Two-Action Example} \label{sec:2state2action}

How can the principal design a signaling scheme to learn the agent's bias level? We use a simple two-state, two-action example to demonstrate how inducing a specific true posterior belief will allow the principal to determine whether $w \ge \tau$ or $w \le \tau$.

The two states of the world are represented as \{$\Good$, $\Bad$\}. The agent has two possible actions: $\Active$ and $\Passive$. Taking the $\Passive$ action always yields a utility of \(0\), independently of the state. For the $\Active$ action, the utility is \(a\) if the state is $\Good$ and  \(-b\) otherwise; $a, b > 0$. We use the probability of the $\Good$ state to represent a belief, so the prior is a number $\mu_0 \in [0, 1]$, which is only a slight abuse of notation. With belief $\mu \in [0, 1]$ for the $\Good$ state (and $1-\mu$ for the $\Bad$ state), the agent's expected utility for choosing the $\Active$ action is $ a\mu - b(1-\mu) = (a+b) \mu - b$. Thus, the $\Active$ action is better than the $\Passive$ action (so the agent will take $\Active$) if 
\begin{equation} \label{eq:two-state-two-action-indifference}
    (a+b)\mu - b > 0 \quad \iff \quad
\mu > \tfrac{b}{a+b} =: \mu^*.
\end{equation}
Conversely, the $\Passive$ action is better if \(\mu < \mu^*\). Here, $\mu^* = \frac{b}{a+b}$ is an \emph{indifference belief} where the agent is indifferent between the two actions.  We assume that the prior \(\mu_0\) satisfies \(0 < \mu_0 < \mu^*\), so the agent chooses the $\Passive$ action by default.

Consider the following constant signaling scheme $\pi_\threshold$ with two signals $\{G, B\}$:
\begin{itemize}[leftmargin=20pt]
    \item If the state is $\Good$, send signal \(G\) with probability one.
    \item If the state is $\Bad$, send signal \(B\) with probability $\frac{\mu^*-\mu_0}{(\mu^*-\tau\mu_0)(1-\mu_0)}$ and signal \(G\) with the complement probability.
\end{itemize}

We will show that, by repeatedly using $\pi_\threshold$, we can test whether the agent's bias \(w\) is \(\leq \threshold\) or $\ge \tau$. By Bayes' Rule, the true posterior beliefs (for the $\Good$ state) associated with the two signals are $\mu_B = 0$ (i.e., on receiving $B$, the agent knows the state is $\Bad$ for sure) and
\begin{equation*}
    \mu_G = P(\Good | G) = \tfrac{\mu_0 \cdot \pi_\tau(G|\Good)}{\mu_0 \cdot \pi_\tau(G|\Good) + (1-\mu_0) \cdot \pi_\tau(G|\Bad)} = \tfrac{\mu^* - \threshold \mu_0}{1 - \threshold}. 
\end{equation*}
Notably, the posterior $\mu_G$ satisfies the following property: if the agent's bias level $w$ is exactly equal to $\tau$, then the agent's biased belief is equal to the indifference belief:
\begin{equation*}
    \text{when $w = \tau$}, \quad \quad \nu_G = \tau \mu_0 + (1-\tau)\mu_G = \mu^*. 
\end{equation*}
We also note the inequality $\mu_0 < \mu^* < \mu_G$. 
As a result, if the agent's bias level $w$ is greater than $\tau$, then the biased belief will be smaller than $\mu^*$, and otherwise the opposite is true: 
\begin{equation*}
    \text{for $ w > \tau$,} \quad w \mu_0 + (1-w) \mu_G < \mu^*; \quad \quad \text{ for $ w < \tau$,} \quad w \mu_0 + (1-w) \mu_G > \mu^*. 
\end{equation*}
By \cref{eq:two-state-two-action-indifference}, this means that the agent will take the $\Passive$ action if $w > \tau$, and the $\Active$ action if $w < \tau$ (on receiving $G$).  Therefore, by observing which action is taken by the agent when signal $G$ is sent, we can immediately tell whether $w \le \tau$ or $w \ge \tau$.  This leads to the following: 

\begin{theorem}\label{thm:two-state-sample-complexity}
In the two-state, two-action example, for any threshold \(\threshold \in [0, \frac{1-\mu^*}{1-\mu_0}]\), the above constant signaling scheme \(\pi_\threshold\) can test whether the agent's bias \(w\) satisfies \(w \leq \threshold\) or \(w \geq \threshold\): specifically, whenever the signal $G$ is sent, 
\begin{itemize}[itemsep=2pt, topsep=0pt, parsep=0pt, partopsep=0pt, leftmargin=15pt]
    \item if the agent takes action $\Active$, then \(w \leq \threshold\),
    \item if the agent takes action $\Passive$, then \(w \geq \threshold\).
\end{itemize}
The sample complexity of this scheme is \(\frac{\mu^* - \mu_0}{\mu_0(1-\threshold)} + 1\), which increases with \(\threshold\).
\end{theorem}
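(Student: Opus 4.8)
I would organize the argument in three stages: (i) check that $\pi_\tau$ is a legitimate signaling scheme exactly when $\tau$ lies in the stated range; (ii) prove the two correctness bullets by a monotonicity argument in $w$; (iii) compute the expected termination time and its monotonicity in $\tau$.

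\emph{Stage (i): feasibility.} The only thing that can go wrong is that the probability $p := \frac{\mu^*-\mu_0}{(\mu^*-\tau\mu_0)(1-\mu_0)}$ of sending $B$ in state $\Bad$ fails to lie in $[0,1]$. From $\tau \le \frac{1-\mu^*}{1-\mu_0} \le 1$ together with $0<\mu_0<\mu^*<1$ one deduces $\tau<\mu^*/\mu_0$, so both the numerator and denominator of $p$ are strictly positive and $p>0$. For the upper bound I would clear denominators: $p\le 1 \iff \mu^*-\mu_0 \le (\mu^*-\tau\mu_0)(1-\mu_0)$, and expanding and canceling reduces this \emph{exactly} to $\tau(1-\mu_0)\le 1-\mu^*$, i.e.\ the hypothesis on $\tau$. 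Hence $\pi_\tau$ is well-defined precisely on the stated interval, and the Bayes'-rule computation already recorded in the text gives $\mu_G=\frac{\mu^*-\tau\mu_0}{1-\tau}$, with $\mu_G>\mu_0$ since $\mu^*>\mu_0$.

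\emph{Stage (ii): correctness.} Fix the bias level $w$ and write $\nu_G(w)=w\mu_0+(1-w)\mu_G$. Since $\mu_G>\mu_0$, the map $w\mapsto \nu_G(w)$ is strictly decreasing, and the identity $\tau\mu_0+(1-\tau)\mu_G=\mu^*$ gives $\nu_G(\tau)=\mu^*$. Therefore $w>\tau$ implies $\nu_G(w)<\mu^*$ and $w<\tau$ implies $\nu_G(w)>\mu^*$; by the indifference condition \eqref{eq:two-state-two-action-indifference} the agent plays $\Passive$ in the first case and $\Active$ in the second (at $w=\tau$ the agent is indifferent, and either action is consistent with both stated conclusions). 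Taking contrapositives yields the two bullets. Moreover, on signal $B$ the biased belief is $\nu_B(w)=w\mu_0<\mu_0<\mu^*$ for every $w$, so the agent always plays $\Passive$ upon receiving $B$ and signal $B$ reveals nothing about $w$; consequently the algorithm terminates exactly the first time signal $G$ is sent, which (since $G$ has positive probability at every step) occurs almost surely, so the test always halts with a correct answer.

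\emph{Stage (iii): sample complexity.} Because $\pi_\tau$ is reused at every step on i.i.d.\ states and termination is the first occurrence of $G$, the termination time $T_\tau(\pi_\tau,w)$ is geometric with success probability $q:=\Pr[\text{send }G]=\mu_0+(1-\mu_0)(1-p)$, which does not depend on $w$; a short simplification gives $q=\frac{\mu_0(1-\tau)}{\mu^*-\tau\mu_0}$. Hence $\mathbb{E}[T_\tau(\pi_\tau,w)]=1/q=\frac{\mu^*-\tau\mu_0}{\mu_0(1-\tau)}=\frac{\mu^*-\mu_0}{\mu_0(1-\tau)}+1$ for all $w$, so this equals the worst-case sample complexity, and its $\tau$-derivative $\frac{\mu^*-\mu_0}{\mu_0(1-\tau)^2}$ is positive on $[0,1)$, giving monotonicity in $\tau$. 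All the computations are routine; the only genuinely load-bearing observations are that the feasibility constraint $p\le 1$ is \emph{equivalent} to the stated bound on $\tau$, and that signal $B$ is uninformative — the latter is what makes the termination time geometric with a $w$-independent rate, so that the outer maximization over $w\in[0,1]$ is trivial.
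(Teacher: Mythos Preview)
Your proposal is correct and follows essentially the same approach as the paper: verify that $\pi_\tau(B\mid\Bad)\in[0,1]$ on the stated range of $\tau$, use the monotonicity of $\nu_G(w)$ through the indifference point $\mu^*$ to get the two bullets, and compute the geometric termination time via $P(G)=\frac{\mu_0(1-\tau)}{\mu^*-\tau\mu_0}$. You are simply more explicit than the paper on a few points it leaves implicit---the algebraic equivalence between $p\le 1$ and the bound on $\tau$, the uninformativeness of signal $B$ (which justifies that termination is exactly the first $G$ and that the rate is $w$-independent, making the $\max_w$ trivial), and the derivative check for monotonicity.
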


\begin{proof}
The range \(\threshold \in [0, \frac{1-\mu^*}{1-\mu_0}]\) ensures that the probability $\pi_\tau(B|\Bad) = \frac{\mu^*-\mu_0}{(\mu^*-\tau\mu_0)(1-\mu_0)}$ is in $[0, 1]$.  The two items in the theorem follow from the argument before the theorem statement.  The sample complexity is equal to the expected number of time steps until a $G$ signal is sent, which is a geometric random variable with success probability $P(G) = \mu_0 \pi_\tau(G|\Good) + (1-\mu_0) \pi_\tau(G|\Bad) = \frac{\mu_0(1-\tau)}{\mu^*-\tau\mu_0}$. So the sample complexity is equal to the mean $\frac{1}{P(G)} = \frac{\mu^* - \mu_0}{\mu_0(1-\threshold)} + 1$. 
\end{proof}

The \emph{main intuition} behind this result is that in order to test whether $w \ge \tau$ or $w\le \tau$, we design a signaling scheme where certain signals induce posteriors that make the agent  \emph{indifferent between two actions if the agent's bias level is exactly $\tau$}.  Then, the action actually taken by the agent will directly reveal whether $w \ge \tau$ or $w \le \tau$.  Such signals are \emph{useful} signals, but not all signals are necessarily useful.  The sample complexity is then determined by the total probability of useful signals.  This intuition will carry over to computing the optimal signaling scheme for the general case in \cref{sec:optimal_signaling_scheme}.

Finally, we remark that using the constant signaling scheme $\pi_\tau$ constructed above to test $w \ge \tau$ or $w \le \tau$ is in fact the optimal adaptive algorithm, according to the results we will present in \cref{sec:optimal_signaling_scheme}.  So, the minimal sample complexity to test whether $w \ge \tau$ or $w\le \tau$ in this two-state, two-action example is exactly \(\frac{\mu^* - \mu_0}{\mu_0(1-\threshold)} + 1\) as shown in \cref{thm:two-state-sample-complexity}.

\section{Computing the Optimal Signaling Scheme in the General Case} \label{sec:optimal_signaling_scheme}

In this section, we generalize the initial observations from the previous section to the case with any number of actions and states and general utility function $U$. We will show how to compute the optimal algorithm (signaling scheme) to test the agent's bias level.
There are three key ingredients.  First, we prove that we can use a constant signaling scheme. 
Second, we develop a ``revelation principle'' to further simplify the space of signaling schemes.
Building on these two steps, we show that the optimal signaling scheme can be computed by a linear program. 

\subsection{Optimality of Constant Signaling Schemes}
\label{subsec:constant}

In this subsection, we show that adaptive algorithms are no better than constant algorithms for the problem of testing whether $w\ge \tau$ or $w\le \tau$.  Therefore, to find the algorithm with minimal sample complexity, we only need to consider constant algorithms/signaling schemes.

\begin{lemma}\label{lemma:constant_is_optimal}
Fix \(\tau \in (0, 1)\). For the problem of testing whether \(w \ge \tau\) or \(w \le \tau\), the sample complexity of any adaptive algorithm is at least that of the optimal constant algorithm (i.e., using a fixed signaling scheme repeatedly). 
\end{lemma}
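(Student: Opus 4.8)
The plan is to show that any adaptive algorithm $\Pi$ can be "dominated" by a single constant algorithm. The key structural observation is this: from the principal's point of view, the only feature of the unknown agent that matters for the test is a single number $w \in [0,1]$, and for any signaling scheme $\pi$ the set of possible agent responses is extremely limited. Specifically, fix a signaling scheme $\pi$ inducing true posteriors $\{\mu_s\}_{s \in S}$; for each signal $s$, as $w$ ranges over $[0,1]$, the biased belief $\nu_s = w\mu_0 + (1-w)\mu_s$ traces a line segment from $\mu_s$ to $\mu_0$, and the agent's best-response action $a_s(w)$ is a (weakly) monotone, piecewise-constant function of $w$ along this segment. I would first make precise the notion of when a signaling scheme is "informative at $\tau$": there must exist some signal $s$ and some threshold-crossing of the best-response at $w=\tau$, i.e. the action taken on receiving $s$ differs for $w$ slightly above versus slightly below $\tau$ (this is exactly the phenomenon exploited in \cref{thm:two-state-sample-complexity}, where $\nu_G = \mu^*$ makes the agent indifferent precisely at $w=\tau$). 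If no signaling scheme in the principal's arsenal is informative at $\tau$ in this sense, then no algorithm (adaptive or not) can ever terminate with a correct deterministic answer for $w$ near $\tau$, so the sample complexity is $+\infty$ for both classes and the claim is trivial.

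Assuming informative schemes exist, the heart of the argument is a "reduction to a single useful scheme." I would argue that at any history $\mathcal{H}_t$, the signaling scheme $\pi_{t+1}$ chosen by $\Pi$ can only help the principal if it is informative at $\tau$ — any signal whose induced action is locally constant around $w=\tau$ yields no new information about whether $w \gtreqless \tau$ (it is consistent with both sides). Among all schemes informative at $\tau$, each one, when it sends a "useful" signal, immediately resolves the test; so the only quantity that matters is the probability that a useful signal is sent. Therefore an adaptive algorithm is, in effect, at each step choosing an informative scheme and then either terminating (if a useful signal came up) or continuing. One then shows that always playing the single scheme $\pi^*$ that maximizes the probability of a useful signal stochastically dominates any such adaptive strategy: the termination time under $\Pi$ stochastically dominates a geometric random variable with the success probability of whatever scheme $\Pi$ uses, which is at most that of $\pi^*$. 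Formally I would set $p^* = \sup_\pi \Pr[\pi \text{ sends a useful signal}]$, show this sup is attained (compactness of the space of posteriors / the splitting-lemma polytope, plus the fact that "useful signal probability" is upper semicontinuous), and then couple: under any $\Pi$, each step independently fails to terminate with probability at least $1 - p^*$ regardless of history, giving $\mathbb{E}[T_\tau(\Pi, w)] \ge 1/p^*$ for $w$ near $\tau$, which equals the sample complexity of the constant algorithm that repeats $\pi^*$.

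The main obstacle I anticipate is handling the adaptivity carefully: one must rule out the possibility that the principal gains from correlating the choice of scheme with the realized state $\theta_t$ (which the principal observes) in a way that leaks more than the signal does, and from the possibility that a sequence of individually-uninformative schemes becomes collectively informative. For the first point, the resolution is that the agent only sees the signal $s_t$, not $\theta_t$ or $\pi_{t+1}$'s dependence on it — wait, the agent does see $\pi_{t+1}$ since it is publicly committed, but it is committed \emph{before} $\theta_t$ is drawn, so within a single round the scheme is fixed and the earlier analysis applies; across rounds, one invokes that $\theta_1, \theta_2, \dots$ are i.i.d.\ and independent of $w$, so conditioning on past states and signals gives no information about $w$ beyond what the past \emph{actions} revealed, and actions only reveal which side of a threshold $w$ is on. For the second point — collective informativeness — the key is that "the agent's action on signal $s$ differs above vs.\ below $\tau$" is a property of a single round; if every scheme used has all its signals inducing actions locally constant at $\tau$, then the entire history $\mathcal{H}_t$ has identical distribution for $w = \tau - \eps$ and $w = \tau + \eps$ for small $\eps$, so no deterministic answer is possible. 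I would carry out the steps in the order: (i) define informativeness/useful signals at $\tau$; (ii) dispose of the no-informative-scheme case; (iii) show only useful-signal probability matters and that past non-terminating history is uninformative about the side of $\tau$; (iv) prove the sup $p^*$ is attained; (v) couple with a geometric to conclude the stochastic-dominance bound and hence the sample-complexity comparison.
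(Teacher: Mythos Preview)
Your proposal is correct and tracks the paper's argument closely, but with a slightly different decomposition. The paper classifies each signal of any scheme geometrically as \emph{internal}, \emph{boundary}, or \emph{external} according to whether the $\tau$-biased posterior $\tau\mu_0+(1-\tau)\mu_s$ lies in $R_{a_0}$, $\partial R_{a_0}$, or $\ext R_{a_0}$; it then (i) converts every external signal into a boundary signal without decreasing the useful-signal mass (\cref{lem:adaptive-external-to-boundary}), (ii) shows internal signals cannot help because the history under $w=\tau\pm\eps$ is identical (\cref{lem:internal-not-useful}), and (iii) observes that a boundary signal resolves the test instantly (\cref{lemma:boundary-signal}). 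Your route collapses (i) and (iii) by working directly with the abstract notion of a ``useful'' signal (one whose best response changes at $w=\tau$), which already covers external signals sitting on an indifference boundary between two non-default actions; your step (iii) is exactly the paper's \cref{lem:internal-not-useful}. Both arrive at the same geometric-dominance inequality $\mathbb E[T_\tau(\Pi,w)]\ge 1/p^*$. The paper's decomposition pays off downstream: singling out $\partial R_{a_0}$ is what makes the revelation principle (\cref{lem:revelation-principle}) and the LP tractable, whereas your more abstract ``useful'' class would need the external-to-boundary reduction anyway once you try to compute $p^*$.

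One small slip: $a_s(w)$ is not ``monotone'' in any meaningful sense when $|A|>2$; what you actually need (and what makes a useful signal immediately decisive) is that along the segment $w\mapsto w\mu_0+(1-w)\mu_s$ each best-response region is convex, so each action is taken on a single interval of $w$-values. That is the fact to invoke when you justify ``a useful signal immediately resolves the test.''
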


To prove this lemma, we introduce some notations.  
For any action \( a \in A\setminus\{a_0\} \), define vector
\begin{equation} \label{eq::delta_utility}
    c_a = (c_{a, \theta})_{\theta \in \Theta} = \big( U(a_0, \theta) - U(a, \theta) \big)_{\theta \in \Theta} \in \mathbb{R}^{|\Theta|},
\end{equation}
whose components are the utility differences between the default action \( a_0 \) and any other action \( a \) at different states \( \theta \in \Theta \).
Let $R_{a_0} \subseteq \Delta(\Theta)$ be the region of beliefs under which the agent strictly prefers \( a_0 \) over any other action: 
\[
R_{a_0} = \big\{ \mu \in \Delta(\Theta) \mid  \forall a \in A\setminus \{a_0\}, \,  c_a^\top \mu > 0 \big\}. 
\]
It is the intersection of $|A|-1$ open halfspaces with the probability simplex $\Delta(\Theta)$. As the agent strictly prefers $a_0$ at the prior $\mu_0$, we have $\mu_0 \in R_{a_0}$. 
The boundary of this region, \( \partial R_{a_0} \), is the set of beliefs where the agent is indifferent between \( a_0 \) and at least one other action $a \in A\setminus\{a_0\}$ and $a_0$ and $a$ are both (weakly) better than any other action:
\begin{equation} \label{eq:boundary-set}
    \partial R_{a_0} = \big\{ \mu \in \Delta(\Theta) \mid  \exists a \in A\setminus \{a_0\}, c_{a}^\top \mu = 0 
 ~ \text{and} ~ \forall a' \in A\setminus \{a_0\}, \,  c_{a'}^\top \mu \ge 0 \big\}. 
\end{equation}
Lastly, the exterior of \( R_{a_0} \), denoted as \( \ext R_{a_0} \), comprises the set of beliefs where the agent strictly prefers not to choose \( a_0 \):
\[
 \ext R_{a_0} = \Delta(\Theta) \setminus (R_{a_0} \cup \partial R_{a_0}) = \big\{ \mu \in \Delta(\Theta) \mid \exists a \in A\setminus \{a_0\}, \, c_a^\top \mu < 0 \big\}.
\]

Given a signaling scheme $\pi$, we classify its signals into three types based on the location of the biased belief associated with the signal with respect to the region $R_{a_0}$. 
\begin{definition}\label{def:boundary-internal-external}
Let $\tau \in (0, 1)$ be a parameter. Let $s \in S$ be a signal from a signaling scheme $\pi$, with associated true posterior $\mu_s$ and $\tau$-biased posterior $\mu_s^\tau = \tau \mu_0 + (1-\tau) \mu_s$.  We say $s$ is
\begin{itemize}[itemsep=2pt, topsep=0pt, parsep=0pt, partopsep=0pt, leftmargin=15pt]
    \item an \textbf{internal signal} if $\mu_s^\tau \in R_{a_0}$;
    \item a \textbf{boundary signal} if $\mu_s^\tau \in \partial R_{a_0}$; 
    \item an \textbf{external signal} if $\mu_s^\tau \in \ext R_{a_0}$. 
\end{itemize}
\end{definition}

The above classification helps to formalize the idea of whether a signal is ``useful'' for bias detection.  A boundary signal is useful because the action taken by the agent after receiving a boundary signal immediately tells whether $w \ge \tau$ or $w \le \tau$: 
\begin{lemma}\label{lemma:boundary-signal}
When a boundary signal is realized, the agent's action immediately reveals whether \(w \ge \tau\) or \(w \le \tau\). Specifically, if the agent chooses action \(a_0\), then \(w \ge \tau\); otherwise, \(w \le \tau\).
\end{lemma}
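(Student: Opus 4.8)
The plan is to reduce the whole statement to a single scalar sign comparison. The first step is to record the elementary identity
\[
\nu_s \;=\; \Big(1 - \tfrac{w-\tau}{1-\tau}\Big)\,\mu_s^\tau \;+\; \tfrac{w-\tau}{1-\tau}\,\mu_0 ,
\]
which follows by expanding $\mu_s^\tau = \tau\mu_0 + (1-\tau)\mu_s$ and using $1-\tau>0$ (valid since $\tau\in(0,1)$). Writing $\lambda := \frac{w-\tau}{1-\tau}$, I would note two facts: $\lambda \ge 0 \iff w \ge \tau$, and $\lambda \le 1$ holds unconditionally because $w \le 1$. So $\nu_s$ is an affine combination of the boundary point $\mu_s^\tau$ and the prior $\mu_0$ that steps "past" $\mu_s^\tau$ toward $\mu_0$ precisely when $w \ge \tau$.

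Next I would unpack the definition of a boundary signal. By \cref{def:boundary-internal-external} and \cref{eq:boundary-set}, $\mu_s^\tau \in \partial R_{a_0}$ means there is an action $a \in A\setminus\{a_0\}$ with $c_a^\top \mu_s^\tau = 0$ and, for every $a' \in A\setminus\{a_0\}$, $c_{a'}^\top \mu_s^\tau \ge 0$. Separately, since $\mu_0 \in R_{a_0}$, we have $c_{a'}^\top \mu_0 > 0$ strictly for every $a' \ne a_0$. Applying linearity of $c_a^\top(\cdot)$ to the identity above gives $c_a^\top \nu_s = (1-\lambda)\,c_a^\top\mu_s^\tau + \lambda\, c_a^\top\mu_0 = \lambda\, c_a^\top\mu_0$, which has the same sign as $\lambda$, and hence the same sign as $w-\tau$.

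The two bullets then drop out. If the agent chooses $a_0$, then $a_0$ is weakly optimal at $\nu_s$, so $c_a^\top\nu_s \ge 0$, i.e. $\lambda\, c_a^\top\mu_0 \ge 0$; dividing by $c_a^\top\mu_0 > 0$ yields $\lambda \ge 0$, i.e. $w \ge \tau$. For the other bullet I would argue contrapositively: if $w > \tau$, then $\lambda \in (0,1]$, so for every $a' \ne a_0$, $c_{a'}^\top\nu_s = (1-\lambda)\,c_{a'}^\top\mu_s^\tau + \lambda\, c_{a'}^\top\mu_0 \ge \lambda\, c_{a'}^\top\mu_0 > 0$ (using $1-\lambda \ge 0$ with $c_{a'}^\top\mu_s^\tau \ge 0$, and $\lambda>0$ with $c_{a'}^\top\mu_0>0$); hence $a_0$ is the unique expected-utility maximizer at $\nu_s$ and must be the agent's choice. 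Therefore, if the agent chooses any action other than $a_0$, we must have $w \le \tau$. The borderline case $w=\tau$ gives $\nu_s = \mu_s^\tau \in \partial R_{a_0}$, where the agent is indifferent between $a_0$ and $a$, so either choice is consistent with both $w \ge \tau$ and $w \le \tau$ — which is exactly why the lemma is stated with weak inequalities.

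I do not expect a substantive obstacle; the only care needed is the bookkeeping of strict versus weak inequalities. In particular, the proof must use the strictness $c_{a'}^\top\mu_0 > 0$ coming from $\mu_0\in R_{a_0}$ together with $\lambda\le 1$ to upgrade "$a_0$ is merely weakly optimal at $\mu_s^\tau$" to "$a_0$ is the unique optimizer at $\nu_s$" once $w>\tau$, and it must keep the degenerate case $w=\tau$ out of either strict branch, handling it instead via the indifference remark.
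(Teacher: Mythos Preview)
Your proof is correct and follows essentially the same convexity idea as the paper: write $\nu_s$ as an affine combination of the boundary point $\mu_s^\tau$ and the interior point $\mu_0$, then use the linear inequalities defining $R_{a_0}$. Your explicit $\lambda$-parametrization and careful tracking of strict versus weak inequalities in fact cleans up a sign slip in the paper's terse proof (where the cases $w<\tau$ and $w>\tau$ are inadvertently swapped).
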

\begin{proof}
If the agent's bias level satisfies $w < \tau$, then the biased belief $\nu_s = w \mu_0 + (1-w) \mu_s$ must be inside $R_{a_0}$ (because $\mu_s^\tau = \tau \mu_0 + (1-\tau)\mu_s$ is on the boundary of $R_{a_0}$ and $\mu_0 \in R_{a_0}$), so the agent strictly prefers the default action $a_0$.  If $w > \tau$, then the biased belief $\nu_s$ is outside of $R_{a_0}$, so the agent will not take action $a_0$. 
\end{proof}

An external signal might also be useful in revealing whether $w \ge \tau$ or $w \le \tau$ if the agent is indifferent between some actions $a_1, a_2$ other than $a_0$ at the $\tau$-biased belief $\mu_s^\tau$.  However, the following lemma shows that, in such cases, we can always modify the signaling scheme to turn the external signal into a boundary signal.  This modification will increase the total probability of useful signals and hence reduce the sample complexity. 
The proof of this lemma is in \cref{proof:adaptive-external-to-boundary}. 
\begin{lemma}\label{lem:adaptive-external-to-boundary}
Suppose \(\Pi\) is an adaptive algorithm that uses signaling schemes with internal, boundary, and external signals. Then, there exists another adaptive algorithm \(\Pi'\) with equal or lower sample complexity that employs only signaling schemes with internal and boundary signals.
\end{lemma}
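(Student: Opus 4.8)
The plan is to turn the given adaptive algorithm $\Pi$ into $\Pi'$ by editing it node‑by‑node in its decision tree: at \emph{every} node, replace each external signal by a boundary signal, obtained from the Splitting Lemma by pulling that signal's true posterior back toward $\mu_0$ and redistributing its probability over the remaining signals; let $\Pi'$ halt (with the verdict of \cref{lemma:boundary-signal}) whenever one of these new boundary signals is drawn, and otherwise let it track which node of $\Pi$'s tree it is at — using the observed state, signal and action, all of which stay consistent with $\Pi$'s unedited scheme on the surviving signals — play the edited scheme there, and output $\Pi$'s verdict as soon as that simulated history would let $\Pi$ conclude. So defined, $\Pi'$ is a correct adaptive algorithm. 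Two things then need checking: (i) the edited schemes are genuine, Bayes‑consistent schemes using only internal and boundary signals, and (ii) the expected running time does not increase.

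\textbf{The local edit.} Consider a node with scheme $\pi$ and an external signal $s$ with true posterior $\mu_s$ and probability $p=\pi(s)$. Since $R_{a_0}\cup\partial R_{a_0}$ is closed and convex with $\mu_0$ in its relative interior, while $\mu_s^\tau\in\ext R_{a_0}$, the ray from $\mu_0$ through $\mu_s$ leaves this set at a unique boundary point, reached strictly before $\mu_s^\tau$; this also forces $p<1$, since $p=1$ would give $\mu_s=\mu_0\in R_{a_0}$. Hence there is a unique $\beta\in(0,1)$ with $\mu_{s'}:=(1-\beta)\mu_0+\beta\mu_s$ satisfying $\mu_{s'}^\tau:=\tau\mu_0+(1-\tau)\mu_{s'}\in\partial R_{a_0}$, i.e.\ $s'$ is a boundary signal. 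Substituting $\mu_s=\beta^{-1}\big(\mu_{s'}-(1-\beta)\mu_0\big)$ into $\mu_0=p\,\mu_s+\sum_{s''\neq s}\pi(s'')\mu_{s''}$ and renormalizing gives
\[
\mu_0 \;=\; \frac{p}{\beta+p(1-\beta)}\,\mu_{s'} \;+\; \frac{\beta}{\beta+p(1-\beta)}\sum_{s''\neq s}\pi(s'')\,\mu_{s''},
\]
a convex combination (coefficients are nonnegative and sum to one), so by the Splitting Lemma it is realized by a scheme $\pi'$ in which $s'$ has probability $\pi'(s')=\frac{p}{\beta+p(1-\beta)}\ge p$ and every other signal keeps its posterior but has probability multiplied by the common factor $\sigma:=\frac{\beta}{\beta+p(1-\beta)}\in(0,1)$. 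Since the other posteriors are unchanged, their internal/boundary type is preserved; iterating over all external signals of $\pi$ (there is always at least one other signal because $p<1$) yields a scheme with only internal and boundary signals, whose original non‑external signals have been rescaled by a product of such $\sigma$‑factors, still in $(0,1)$.

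\textbf{Running time.} Let $f(v,w)$ (resp.\ $f'(v,w)$) be the expected number of remaining rounds of $\Pi$ (resp.\ $\Pi'$) started from node $v$ when the bias is $w$; it suffices to prove $f'(\mathrm{root},w)\le f(\mathrm{root},w)$ for every $w$, since then $T_\tau(\Pi')=\max_w f'(\mathrm{root},w)\le\max_w f(\mathrm{root},w)=T_\tau(\Pi)$. I would argue by backward induction on the tree, truncating at depth $N$ and letting $N\to\infty$ if it is infinite. At a node $v$, a new boundary signal always terminates $\Pi'$ by \cref{lemma:boundary-signal}; every surviving signal of $\pi'_v$ has the same true posterior as in $\pi_v$, hence elicits the same agent action and the same continue/terminate decision, but now occurs with probability scaled by $\sigma<1$. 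Writing $\mathrm{child}$ for $\Pi$'s continuation node and $\mathrm{child}'$ for $\Pi'$'s, summing over the internal signals $s$ of $\pi_v$ that do not terminate, and using the induction hypothesis $f'(\mathrm{child}',w)\le f(\mathrm{child},w)$ together with $\sigma<1$,
\[
f'(v,w) \;=\; 1 + \sigma\sum_{s}\pi_v(s)\,\E_{\theta\sim\mu_s}\!\big[f'(\mathrm{child}',w)\big] \;\le\; 1 + \sum_{s}\pi_v(s)\,\E_{\theta\sim\mu_s}\!\big[f(\mathrm{child},w)\big] \;\le\; f(v,w),
\]
where the last step holds because any nonterminating external signals of $\pi_v$ contribute further nonnegative terms to $f(v,w)$. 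This bound holds for all $w$, so we never need to invoke that the worst case occurs near $\tau$.

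\textbf{Main obstacle.} The genuinely delicate step is the geometric claim behind the local edit: along the ray from $\mu_0$ through an external signal's posterior, the set $R_{a_0}\cup\partial R_{a_0}$ is exited \emph{strictly before} $\mu_s^\tau$, so that $\mu_{s'}$ is a \emph{proper} convex combination of $\mu_0$ and $\mu_s$ (i.e.\ $\beta\in(0,1)$) with $\mu_{s'}^\tau$ landing exactly on $\partial R_{a_0}$. This is what makes the Splitting Lemma applicable and simultaneously yields $\pi'(s')\ge\pi(s)$ (hence $\sigma<1$); it rests on convexity and closedness of the "prefers $a_0$" region and on $\mu_0$ lying in its relative interior. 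Everything else — the Bayes‑consistency bookkeeping and the monotonicity of the value functions — is routine once that fact is in hand. One minor point to dispatch in passing is tie‑breaking when the agent has several optimal actions; it affects only finitely many values of $w$ and hence neither the worst case nor the induction.
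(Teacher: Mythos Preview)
Your argument is correct and takes the same route as the paper: pull each external posterior back along the segment toward $\mu_0$ until its $\tau$-biased version lands on $\partial R_{a_0}$, realize the new decomposition via the Splitting Lemma, and use that the surviving signals are uniformly downscaled (equivalently, the new boundary signal carries at least the old probability) to compare sample complexities. You flesh out the decision-tree coupling and the backward-induction running-time bound more carefully than the paper does; the only slip is writing ``internal signals'' where ``surviving (non-external) signals'' is meant in the displayed recursion for $f'(v,w)$, but the inequality goes through regardless.
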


An internal signal, on the other hand, is not useful for testing $w \ge \tau$ or $w \le \tau$, for the following reason.  For an internal signal, the biased belief with bias level $\tau$, $\mu_s^\tau$, lies inside $R_{a_0}$.  Since $R_{a_0}$ is an open region, there must exist a small number $\eps > 0$ such that when the agent has bias level $w = \tau + \eps$ or $\tau - \eps$, the biased belief with bias level $w$, $w \mu_0 + (1-w)\mu_s$, is also inside the region $R_{a_0}$, so the agent will take action $a_0$.  As the agent takes $a_0$ under both $w = \tau+\eps$ and $\tau-\eps$, we cannot distinguish these two cases, so this signal is not helpful in determining $w \ge \tau$ or $w\le\tau$.  The following lemma 
formalizes the idea that internal signals are not useful: 

\begin{lemma}\label{lem:internal-not-useful}
To test whether \(w \ge \tau\) or \(w \le \tau\), any adaptive algorithm that uses signaling schemes with boundary and internal signals cannot terminate until a boundary signal is sent.
\end{lemma}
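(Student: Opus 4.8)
The plan is a standard indistinguishability argument. I will show that, as long as only internal signals have been realized (so all observed actions are the default $a_0$ — which is precisely the situation when the agent's bias is $\tau$), the two bias levels $\tau+\eps$ and $\tau-\eps$, for a suitable $\eps>0$, induce exactly the same distribution over everything the principal observes — the schemes, states, signals, and actions — and therefore no deterministic answer can be correct under both, so $\Pi$ cannot yet have terminated.

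The one fact I need about internal signals is that the unique best response to such a signal is $a_0$ not only at bias exactly $\tau$ but throughout an interval of biases around $\tau$. Indeed, for an internal signal $s$ we have $\mu_s^\tau=\tau\mu_0+(1-\tau)\mu_s\in R_{a_0}$, and $R_{a_0}$ is open, being an intersection of open halfspaces with $\Delta(\Theta)$; since $w\mapsto w\mu_0+(1-w)\mu_s$ is continuous and equals $\mu_s^\tau$ at $w=\tau$, there is $\eps_s>0$ such that the biased belief $\nu_s=w\mu_0+(1-w)\mu_s$ lies in $R_{a_0}$ for every $w\in(\tau-\eps_s,\tau+\eps_s)$, in which case the agent strictly prefers $a_0$ after seeing $s$.

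Suppose for contradiction that $\Pi$, which uses only internal and boundary schemes, terminates before a boundary signal is sent; then, run against an agent of bias $\tau$, with positive probability it terminates along a history $h=(\pi_1,\theta_1,s_1,a_1,\dots,\pi_t,\theta_t,s_t,a_t)$ in which every $s_i$ is internal, and (by the previous paragraph) every $a_i=a_0$. Take $h$ of minimal length, so $\Pi$ does not terminate on any proper prefix; then $h$ occurs with probability $\prod_{i\le t}\mu_0(\theta_i)\,\pi_i(s_i\mid\theta_i)>0$. Set $\eps=\tfrac12\min_{i\le t}\eps_{s_i}>0$. Each scheme $\pi_i$ is chosen by $\Pi$ from the (identical) preceding history, each $\theta_i$ is drawn from $\mu_0$, and each $s_i$ from $\pi_i(\cdot\mid\theta_i)$ — none of this depends on $w$ — and by the previous paragraph the agent responds with $a_0$ to each $s_i$ under both $w=\tau+\eps$ and $w=\tau-\eps$. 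Hence $h$ has the same positive probability under $w=\tau+\eps$ and under $w=\tau-\eps$. But on $h$ the algorithm outputs one fixed answer: ``$w\ge\tau$'' is wrong for $w=\tau-\eps<\tau$, and ``$w\le\tau$'' is wrong for $w=\tau+\eps>\tau$. Either way $\Pi$ errs on a positive-probability event, contradicting its correctness, so it cannot terminate before a boundary signal.

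The only delicate point is that a single $\eps>0$ must work simultaneously for all the internal signals seen so far. This is fine because any history is finite and contains only finitely many such signals $s_1,\dots,s_t$, each with its own $\eps_{s_i}>0$, so their minimum is positive; a statement uniform over all conceivable histories would instead require $\inf_s\eps_s>0$, which we do not need. Two routine points complete the argument: we may take $\Pi$ to be deterministic in its choice of scheme given the history (a randomized algorithm is a mixture of such, and a mixture of algorithms that never terminate before a boundary signal also never does), and taking $h$ of minimal length ensures it is actually reached rather than cut off by an earlier termination.
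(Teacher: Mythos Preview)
Your proof is correct and follows essentially the same approach as the paper: both use the openness of $R_{a_0}$ to obtain, for each internal signal $s_k$ in a given history, an $\eps_k>0$ such that the biased belief stays in $R_{a_0}$ (hence the agent plays $a_0$) for all $w$ within $\eps_k$ of $\tau$, and then take the minimum over the finitely many signals seen so far to conclude that biases $\tau+\eps$ and $\tau-\eps$ are indistinguishable on that history. Your presentation is a bit more explicit---framing it as a formal indistinguishability-and-contradiction argument, handling randomized $\Pi$ by passing to a deterministic one, and taking a minimal terminating prefix---whereas the paper states the same reasoning more directly, but the substance is identical.
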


\begin{proof}[Proof of \cref{lemma:constant_is_optimal}]

By \cref{lem:adaptive-external-to-boundary}, the optimal adaptive algorithm only uses signaling schemes with boundary and internal signals.  By \cref{lem:internal-not-useful}, the algorithm cannot terminate until a boundary signal is sent.  By \cref{lemma:boundary-signal}, the algorithm terminates when a boundary signal is sent.  We conclude that the termination time of any adaptive algorithm cannot be better than the constant algorithm that keeps using the signaling scheme that maximizes the total probability of boundary signals. 
\end{proof}

\subsection{Revelation Principle} \label{subsec:revelation_principle}
To compute the optimal constant signaling scheme, we need another technique that is similar to the \emph{revelation principle} in the information design literature \cite{kamenica_bayesian_2011, dughmi_algorithmic_2016}.  The revelation principle says that, in some information design problems, it is without loss of generality to consider only ``direct'' signaling schemes where signals are recommendations of actions for the agent, that is, the signal space is $S = A$, and when the principal sends signal $a$, it should be optimal for the agent to take action $a$ given the posterior belief induced by signal $a$.

Unlike classical information design problems where the agent is unbiased, our problem involves a biased agent, so we need a different revelation principle: the signals are still action recommendations, but when the principal sends signal $a$, action $a$ is optimal for an agent with bias level exactly $\tau$; moreover, if $a \ne a_0$, then an agent with bias level $\tau$ will be indifferent between $a$ and $a_0$.  This insight is formalized in the following lemma:

\begin{lemma}[revelation principle for bias detection]
\label{lem:revelation-principle}
Let $\pi$ be an arbitrary signaling scheme that can test $w \ge \tau$ or $w \le \tau$.
Then, there exists another signaling scheme $\pi'$ that can do so with signal space $S = A$ such that:
\begin{itemize}[itemsep=2pt, topsep=0pt, parsep=0pt, partopsep=0pt, leftmargin=20pt]
    \item[(1)] Given signal $a \in A$, action $a$ is an optimal action for any agent with bias level $w = \tau$. 
    \item[(2)] Given signal $a \in A\setminus\{a_0\}$, actions $a$ and $a_0$ are both optimal for any agent with bias level $w = \tau$.  As a corollary, if the agent's bias level $w < \tau$, then the agent strictly prefers $a$ over $a_0$; and if $w > \tau$, then the agent strictly prefers $a_0$ over any other actions.   
    \item[(3)] The sample complexity satisfies $T_\tau(\pi') \le T_\tau(\pi)$. 
\end{itemize}
\end{lemma}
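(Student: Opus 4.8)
The plan is to construct $\pi'$ from $\pi$ in two stages. First, use the machinery already developed in \cref{subsec:constant}: by \cref{lem:adaptive-external-to-boundary} (applied to the constant algorithm that repeats $\pi$), we may assume without loss of generality that $\pi$ uses only internal and boundary signals; by \cref{lem:internal-not-useful} and \cref{lemma:boundary-signal}, the sample complexity of repeating $\pi$ equals $1/P(\text{boundary signal})$, so it suffices to build a direct scheme $\pi'$ whose total probability of ``useful'' (boundary-type) signals is at least that of $\pi$. Second, convert $\pi$ into a direct scheme by \emph{merging} signals according to the recommended action.

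Concretely, I would proceed as follows. For each boundary signal $s$ of $\pi$, the $\tau$-biased posterior $\mu_s^\tau = \tau\mu_0 + (1-\tau)\mu_s$ lies on $\partial R_{a_0}$, so there is an action $a(s) \in A\setminus\{a_0\}$ with $c_{a(s)}^\top \mu_s^\tau = 0$ and $c_{a'}^\top \mu_s^\tau \ge 0$ for all $a'$; that is, an agent with bias exactly $\tau$ is indifferent between $a(s)$ and $a_0$ and weakly prefers both to everything else. Pick one such $a(s)$ for each boundary signal (breaking ties arbitrarily). For each internal signal $s$, set $a(s) = a_0$. Now define $\pi'$ on signal space $S = A$ by pooling: for $a \in A$, let $\pi'(a) = \sum_{s : a(s) = a} \pi(s)$ and let the induced true posterior be $\mu'_a = \frac{1}{\pi'(a)}\sum_{s: a(s)=a} \pi(s)\mu_s$ (the Bayes-consistent average). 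By the Splitting Lemma (\cref{lem:splitting}), $\pi'$ is a valid signaling scheme since $\sum_a \pi'(a)\mu'_a = \sum_s \pi(s)\mu_s = \mu_0$. The key point to check is that merging preserves or improves the boundary structure: the $\tau$-biased posterior of the merged signal $a$ is $(\mu'_a)^\tau = \tau\mu_0 + (1-\tau)\mu'_a$, which is the $\{\pi(s)\}$-weighted average of the $\mu_s^\tau$ over $s$ with $a(s) = a$. For $a = a_0$ this average is a convex combination of points in $R_{a_0}$ (internal $s$) and points in $\partial R_{a_0}$, which all satisfy $c_{a'}^\top\mu \ge 0$ for every $a'$, so the average satisfies it too — meaning signal $a_0$ is (weakly) optimal for $a_0$ at bias $\tau$, giving part (1) for $a_0$. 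For $a \ne a_0$, all the merged $\mu_s^\tau$ satisfy $c_{a'}^\top \mu_s^\tau \ge 0$ for all $a'$, and $c_a^\top \mu_s^\tau = 0$ for those specific $s$; hence the average $(\mu'_a)^\tau$ satisfies $c_{a'}^\top(\mu'_a)^\tau \ge 0$ for all $a'$ and $c_a^\top (\mu'_a)^\tau = 0$. This is exactly the statement that $a$ and $a_0$ are both optimal for bias $\tau$, giving parts (1) and (2); the corollary about $w < \tau$ and $w > \tau$ then follows verbatim from the convexity argument in the proof of \cref{lemma:boundary-signal}, since $(\mu'_a)^\tau \in \partial R_{a_0}$ and $\mu_0 \in R_{a_0}$. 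Finally, every boundary signal $s$ of $\pi$ contributes its weight $\pi(s)$ to some signal $a \ne a_0$ of $\pi'$, and each such signal $a$ is a boundary signal of $\pi'$, so the total probability of useful signals does not decrease; since the sample complexity of a constant scheme is the reciprocal of this total probability (by \cref{lem:internal-not-useful} and \cref{lemma:boundary-signal}), we get $T_\tau(\pi') \le T_\tau(\pi)$, which is part (3).

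The main obstacle I anticipate is the very first reduction: the lemma is stated for an \emph{arbitrary} scheme $\pi$ that ``can test $w \ge \tau$ or $w \le \tau$,'' which a priori might exploit external signals on which the agent is indifferent between two non-default actions. I need \cref{lem:adaptive-external-to-boundary} to legitimately replace such a $\pi$ by one with only internal and boundary signals without increasing sample complexity, and I must make sure the notion of ``testing'' is preserved under this replacement and under the subsequent pooling (the merged scheme still lets the principal deduce $w \ge \tau$ vs.\ $w \le \tau$ from the observed action, precisely because parts (1)–(2) hold). A secondary subtlety is a potential edge case where $\pi$ has \emph{no} boundary or external signals at all — but then $\pi$ cannot test anything by \cref{lem:internal-not-useful}, contradicting the hypothesis, so this case is vacuous. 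Handling the weak-versus-strict optimality bookkeeping (ties among actions at $\mu_0$ are excluded by \cref{assump:unique-default-action}, which is what makes the $w > \tau$ branch of the corollary yield a \emph{strict} preference for $a_0$) is the only remaining thing to state carefully.
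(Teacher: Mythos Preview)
Your proposal is correct and follows essentially the same approach as the paper: reduce to internal/boundary signals via \cref{lem:adaptive-external-to-boundary}, assign each boundary signal to an action $a\in A\setminus\{a_0\}$ on whose indifference face its $\tau$-biased posterior lies, pool signals by assigned action using the Splitting Lemma, and verify that convex averaging preserves the boundary/interior structure so that properties (1)--(3) hold. One tiny slip: for $a=a_0$ only internal signals are pooled under your definition of $a(s)$, so the average is over points in $R_{a_0}$ alone (no $\partial R_{a_0}$ points enter), but this only strengthens the conclusion.
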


In the above signaling scheme $\pi'$, every $a \in A\setminus \{a_0\}$ is a boundary signal (\cref{def:boundary-internal-external}), which is useful for testing bias: given signal $a \in A\setminus\{a_0\}$, if the agent takes action $a_0$, then it must be $w \ge \tau$; otherwise $w \le \tau$.  The signal $a_0$ is internal and not useful for determining $w\ge\tau$ or $w\le\tau$.  So, the sample complexity of $\pi'$ is equal to the expected time steps until a signal in $A\setminus\{a_0\}$ is sent. 

The idea behind \cref{lem:revelation-principle} is \emph{combination of signals}. Suppose there is a signaling scheme that can determine whether $w \ge \tau$ or $w \le \tau$ with a signal space larger than $A$.  There must exist two signals $s$ and $s'$ under which the agent is indifferent between $a_0$ and some action $a \ne a_0$ if the agent's bias level is exactly $\tau$. 
We can then combine the two signals into a single signal $s''$ under which the agent remains indifferent between $a_0$ and $a$, yielding a new signaling scheme with a smaller signal space. Repeating this can reduce the signal space to size $|A|$. See \cref{app:proof-revelation-principle} for the full proof.

\subsection{Algorithm for Computing the Optimal Signaling Scheme} \label{subsec:algo}

Finally, we present an algorithm to compute the optimal (minimal sample complexity) signaling scheme to test whether $w \ge \tau$ or $w \le \tau$. 
The revelation principle in the previous subsection ensures that we only need a direct signaling scheme where signals are action recommendations.  The optimal direct signaling scheme turns out to be solvable by a linear program, detailed in \cref{alg:sample-complexity-LP}. 
In the linear program, the constraint in \cref{eq:optimal-scheme-LP-optimality} ensures that whenever the principal recommends action $a \in A$, it is optimal for an agent with bias level $\tau$ to take action $a$; this satisfies condition (1) in the revelation principle (\cref{lem:revelation-principle}). 
The indifference constraint (\cref{eq:optimal-scheme-LP-indifference}) ensures that when the recommended action $a$ is not $a_0$, an agent with bias level $\tau$ is indifferent between $a$ and $a_0$; this satisfies condition (2) in the revelation principle. 
The objective (\cref{eq:optimal-scheme-LP-objective}) is to maximize the probability of useful signals (those in $A \setminus\{a_0\}$), hence minimize the sample complexity. 

\begin{algorithm}[ht]
\caption{Linear program to compute the optimal signaling scheme}
\label{alg:sample-complexity-LP}

\SetKwInOut{Input}{Input}
\SetKwInOut{Variable}{Variable}

\Input {prior $\mu_0$, utility function $U$, and the parameter $\tau \in (0, 1)$}
\Variable {signaling scheme $\pi$, consisting of conditional probabilities $\pi(a|\theta)$ for $a\in A, \theta \in \Theta$}

\DontPrintSemicolon
\vspace{0.5em}

Denote $\Delta U(a, a', \theta) = U(a, \theta) - U(a', \theta)$.  Solve the following linear program: 

\begin{equation}\label{eq:optimal-scheme-LP-objective}
    \text{Maximize} \quad \sum_{a \in A\setminus\{a_0\}} \sum_{\theta \in \Theta} \pi(a|\theta) \mu_0(\theta)
\end{equation}

subject to: 
\begin{numcases}{}
   \text{Optimality of $a$ over other actions: $\forall a \in A, \forall a' \in A\setminus\{a\}$}  \nonumber \\
   \hspace{3em} \sum_{\theta \in \Theta} \pi(a|\theta) \cdot \mu_0(\theta) \Big[ (1-\tau) \Delta U(a, a', \theta) + \tau \sum_{\theta'\in\Theta} \mu_0(\theta') \Delta U(a, a', \theta') \Big] \ge 0; \label{eq:optimal-scheme-LP-optimality} \\
   \text{Indifference between $a$ and $a_0$: $\forall a \in A\setminus\{a_0\}$,} \nonumber \\
   \hspace{3em} \sum_{\theta \in \Theta} \pi(a|\theta) \cdot \mu_0(\theta) \Big[(1-\tau) \Delta U(a, a_0, \theta) + \tau \sum_{\theta'\in\Theta} \mu_0(\theta') \Delta U(a, a_0, \theta') \Big] = 0;  \label{eq:optimal-scheme-LP-indifference} \\
   \text{Probability distribution constraints: $\forall \theta \in \Theta$, }  \nonumber \\
   \hspace{3em} \sum_{a \in A} \pi(a | \theta) = 1 \quad  \text{ and } \quad \forall a \in A, ~ \pi(a|\theta) \ge 0.   \nonumber 
\end{numcases}
\end{algorithm} 

\begin{theorem}\label{thm:linear-program}
\cref{alg:sample-complexity-LP} finds a constant signaling scheme for testing $w \ge \tau$ or $\le \tau$ that is optimal among all adaptive signaling schemes. \revision{The sample complexity of the optimal signaling scheme is $1/p^*$, where $p^*$ is the optimal objective value in \cref{eq:optimal-scheme-LP-objective}.}
\end{theorem}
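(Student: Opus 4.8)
The plan is to recognize the linear program in \cref{alg:sample-complexity-LP} as an exact optimization over the direct signaling schemes singled out by the revelation principle, so that \cref{lemma:constant_is_optimal} and \cref{lem:revelation-principle} immediately imply that this restricted class already contains a globally optimal algorithm.

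\emph{Step 1 --- the LP encodes the revelation-principle conditions.} A feasible point $(\pi(a\mid\theta))_{a\in A,\theta\in\Theta}$ defines a signaling scheme with signal space $S=A$; by the splitting lemma (\cref{lem:splitting}), signal $a$ has unconditional probability $\pi(a)=\sum_\theta\mu_0(\theta)\pi(a\mid\theta)$ and, when $\pi(a)>0$, induces true posterior $\mu_a(\theta)=\mu_0(\theta)\pi(a\mid\theta)/\pi(a)$, hence $\tau$-biased belief $\mu_a^\tau=\tau\mu_0+(1-\tau)\mu_a$. The key observation is that multiplying the normalized optimality inequality $\sum_\theta\mu_a^\tau(\theta)\,\Delta U(a,a',\theta)\ge 0$ through by $\pi(a)\ge 0$ and substituting $\pi(a)=\sum_{\theta'}\mu_0(\theta')\pi(a\mid\theta')$ reproduces \cref{eq:optimal-scheme-LP-optimality} term by term (after relabeling the dummy index), and likewise the indifference equality between $a$ and $a_0$ reproduces \cref{eq:optimal-scheme-LP-indifference}. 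Since $\pi(a)\ge 0$, for every signal that is actually sent these LP constraints are equivalent to conditions (1)--(2) of \cref{lem:revelation-principle}, while for $\pi(a)=0$ they hold trivially and the signal is never used. Thus the feasible points of the LP are exactly (modulo never-sent signals) the direct schemes meeting the revelation principle, for which --- per the discussion following \cref{lem:revelation-principle} --- each $a\ne a_0$ is a boundary signal and $a_0$ is internal.

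\emph{Step 2 --- soundness and the $1/p^*$ value.} By \cref{lemma:boundary-signal} the agent's action on any signal $a\ne a_0$ reveals whether $w\ge\tau$ or $w\le\tau$, and by \cref{lem:internal-not-useful} no deterministic answer is possible before such a signal appears; hence the algorithm terminates exactly at the first round in which a signal from $A\setminus\{a_0\}$ is drawn. This is a geometric event whose per-round success probability is $\sum_{a\ne a_0}\pi(a)=\sum_{a\in A\setminus\{a_0\}}\sum_\theta\pi(a\mid\theta)\mu_0(\theta)$, i.e.\ the LP objective \cref{eq:optimal-scheme-LP-objective}, and this probability does not depend on $w$, so the worst-case sample complexity equals $1/\sum_{a\ne a_0}\pi(a)$. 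The LP is always feasible: taking $\pi(a_0\mid\theta)=1$ for all $\theta$ satisfies every constraint, using \cref{assump:unique-default-action} for the $a=a_0$ optimality constraints and triviality elsewhere. Hence $p^*\ge 0$ is well defined, and an optimal LP solution yields a constant scheme of sample complexity $1/p^*$ (to be read as $+\infty$, i.e.\ bias untestable, when $p^*=0$).

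\emph{Step 3 --- matching lower bound, and the main obstacle.} By \cref{lemma:constant_is_optimal} it suffices to dominate every constant scheme, and by \cref{lem:revelation-principle} any constant scheme $\pi$ that tests $w\ge\tau$ or $w\le\tau$ admits a direct scheme $\pi'$ with $S=A$ satisfying (1)--(3); by Step 1, $\pi'$ is LP-feasible, so $\sum_{a\ne a_0}\pi'(a)\le p^*$ and therefore $T_\tau(\pi)\ge T_\tau(\pi')=1/\sum_{a\ne a_0}\pi'(a)\ge 1/p^*$. Combined with Step 2, this shows the scheme output by the LP attains sample complexity $1/p^*$ and is optimal among all adaptive algorithms. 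The one genuinely technical point, and the main obstacle, is the algebraic identity in Step 1: verifying that clearing the denominator $\pi(a)$ turns the normalized optimality and indifference conditions on $\mu_a^\tau$ into precisely \cref{eq:optimal-scheme-LP-optimality} and \cref{eq:optimal-scheme-LP-indifference}, and that the degenerate coordinates with $\pi(a)=0$ cause no trouble; everything else is bookkeeping built on the three lemmas already established.
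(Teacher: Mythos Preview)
Your proposal is correct and follows essentially the same approach as the paper: reduce via \cref{lemma:constant_is_optimal} and \cref{lem:revelation-principle} to direct schemes satisfying the revelation-principle conditions, identify the LP feasible set with exactly that class, and observe that the objective is the per-round probability of a boundary signal so that sample complexity is $1/p^*$. The one point you flag as the ``main obstacle''---the algebra showing that clearing the denominator $\pi(a)$ turns the biased-belief optimality/indifference conditions into \cref{eq:optimal-scheme-LP-optimality} and \cref{eq:optimal-scheme-LP-indifference}---is precisely what the paper isolates as \cref{lem:posterior-linear-form}; your inline treatment of it, including the handling of $\pi(a)=0$, is fine.
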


Using the above optimal signaling scheme, whenever the principal recommends an action $a$ other than $a_0$, the agent's action immediately reveals whether $w \ge \tau$ or $w\le \tau$: if the agent indeed follows the recommendation or takes any other action than $a_0$, then the bias must be small ($w\le \tau$); if the agent takes $a_0$ instead, the bias must be large ($w \ge \tau$).
\revision{Thus, the expected sample complexity is equal to the expected number of iterations until a signal in $A \setminus \{a_0\}$ is sent, which is $1/p^*$}\footnote{\revision{While our focus is the expected sample complexity, we can also derive a high-probability guarantee: with $t \ge \frac{1}{p^*} \log \frac{1}{\delta}$ iterations, we can determine whether \( w \ge \tau \) or \( w \le \tau \) with probability at least $1 - \delta$. 
This is because the probability that no useful signal is sent after $t$ iterations is at most $(1 - p^*)^t \leq \delta$ when $t \ge \frac{1}{p^*} \log \frac{1}{\delta}$.}
}.

The linear program in \cref{alg:sample-complexity-LP} has a polynomial size in $|A|$ (the number of actions) and $|\Theta|$ (the number of states), so it is a polynomial-time algorithm.  \revision{The solution $p^*$ depends on the geometry of the problem instance and does not seem to have a closed-form expression.}

\label{proof:linear-program}
The remainder of this section proves \cref{thm:linear-program}. The proof requires an additional lemma:
\begin{lemma}\label{lem:posterior-linear-form}
Given a signaling scheme \( \pi = (\pi(a|\theta))_{a\in A, \theta\in \Theta} \) and an agent's bias level \( w \), after signal \( a \) is sent, the agent strictly prefers action \( a_1 \) over \( a_2 \) under the biased belief if and only if:
\[
    \sum_{\theta \in \Theta} \pi(a|\theta) \cdot \mu_0(\theta) \Big[ (1-w) \Delta U(a_1, a_2, \theta) + w \sum_{\theta'\in\Theta} \mu_0(\theta') \Delta U(a_1, a_2, \theta') \Big] > 0.
\]
\end{lemma}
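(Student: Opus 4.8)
The plan is a direct computation: unpack the agent's biased belief via Bayes' rule and clear the (positive) normalizing constant. First I would fix a signal $a$ with positive unconditional probability $\pi(a) = \sum_{\theta\in\Theta}\mu_0(\theta)\pi(a|\theta) > 0$; if $\pi(a)=0$ the conditioning event ``signal $a$ is sent'' never occurs and there is nothing to prove. By Bayes' rule the true posterior after $a$ is $\mu_a(\theta) = \mu_0(\theta)\pi(a|\theta)/\pi(a)$, so the agent's biased belief is $\nu_a(\theta) = w\mu_0(\theta) + (1-w)\,\mu_0(\theta)\pi(a|\theta)/\pi(a)$.

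Second, I would write out the preference condition. Under belief $\nu_a$ the agent strictly prefers $a_1$ over $a_2$ exactly when $\sum_{\theta\in\Theta}\nu_a(\theta)\,\Delta U(a_1,a_2,\theta) > 0$. Substituting the expression for $\nu_a$ and multiplying both sides by $\pi(a)>0$ (which preserves the strict inequality) turns this into
\[
  (1-w)\sum_{\theta\in\Theta}\mu_0(\theta)\pi(a|\theta)\,\Delta U(a_1,a_2,\theta) \;+\; w\,\pi(a)\sum_{\theta\in\Theta}\mu_0(\theta)\,\Delta U(a_1,a_2,\theta) \;>\; 0 .
\]

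Third, I would recognize the left-hand side as the claimed bracketed sum. Replacing $\pi(a)$ in the second term by $\sum_{\theta\in\Theta}\mu_0(\theta)\pi(a|\theta)$ and factoring $\sum_{\theta}\pi(a|\theta)\mu_0(\theta)$ out front, the two terms combine into $\sum_{\theta\in\Theta}\pi(a|\theta)\mu_0(\theta)\big[(1-w)\Delta U(a_1,a_2,\theta) + w\sum_{\theta'\in\Theta}\mu_0(\theta')\Delta U(a_1,a_2,\theta')\big]$, which is exactly the expression displayed in the lemma. Since every step is an equivalence, this establishes the ``iff''.

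There is essentially no genuine obstacle here: the lemma is a bookkeeping identity that amounts to an application of Bayes' rule (equivalently, the splitting lemma). The only points requiring a word of care are the degenerate case $\pi(a)=0$ (handled by observing the event is null) and making explicit that multiplying through by $\pi(a)$ is legitimate precisely because $\pi(a)>0$, so that the direction of the strict inequality is preserved.
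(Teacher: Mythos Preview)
Your proposal is correct and follows essentially the same route as the paper: write out the biased posterior via Bayes' rule, expand the expected-utility comparison, multiply through by the positive normalizer $\pi(a)$, and regroup. If anything, you are slightly more careful than the paper in explicitly handling the degenerate case $\pi(a)=0$ and in noting that the multiplication preserves the strict inequality.
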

\begin{proof}
The agent's biased belief under signal $a$ and bias level $w$ is given by
$
    (1-w) \frac{\mu_0(\theta) \pi(a|\theta)}{\sum_{\theta' \in \Theta}\mu_0(\theta') \pi(a|\theta')} + w \mu_0(\theta), ~ \forall \theta \in \Theta.
$
The condition for the agent to strictly prefer \(a_1\) over \(a_2\) is that the expected utility under the biased belief when choosing \(a_1\) is greater than that of \(a_2\):
\[
\sum_{\theta \in \Theta} \left( (1-w) \frac{\mu_0(\theta) \pi(a|\theta)}{\sum_{\theta' \in \Theta}\mu_0(\theta') \pi(a|\theta')} + w \mu_0(\theta) \right) \Delta U(a_1, a_2, \theta) > 0,
\]
where \(\Delta U(a_1, a_2, \theta) = U(a_1, \theta) - U(a_2, \theta)\).
Multiplying by $\sum_{\theta' \in \Theta}\mu_0(\theta') \pi(a|\theta')$, we obtain:
\[
(1-w) \sum_{\theta \in \Theta} \mu_0(\theta) \pi(a|\theta) \Delta U(a_1, a_2, \theta) + w \sum_{\theta \in \Theta} \mu_0(\theta) \sum_{\theta' \in \Theta} \mu_0(\theta') \pi(a|\theta') \Delta U(a_1, a_2, \theta) > 0.
\]
Factoring out the terms, this can be rewritten as:
\[
\sum_{\theta \in \Theta} \pi(a|\theta) \mu_0(\theta) \bigg( (1-w) \Delta U(a_1, a_2, \theta) + w \sum_{\theta' \in \Theta} \mu_0(\theta') \Delta U(a_1, a_2, \theta') \bigg) > 0.
\]

This final expression is positive if and only if the agent to strictly prefer \(a_1\) over \(a_2\). 
\end{proof}

\begin{proof}[Proof of \cref{thm:linear-program}]
According to \cref{lemma:constant_is_optimal} (constant algorithms are optimal) and \cref{lem:revelation-principle} (revelation principle), to find an optimal adaptive algorithm we only need to find the optimal constant signaling scheme that satisfies the conditions in \cref{lem:revelation-principle}.  We verify that the signaling scheme computed from the linear program in \cref{alg:sample-complexity-LP} satisfies the conditions in \cref{lem:revelation-principle}: 
\begin{itemize}[itemsep=2pt, topsep=0pt, parsep=0pt, partopsep=0pt, leftmargin=20pt]
    \item The optimality constraint (\cref{eq:optimal-scheme-LP-optimality}) in the linear program, together with \cref{lem:posterior-linear-form}, ensures that: whenever signal $a \in A$ is sent, action $a$ is weakly better than any other action for an agent with bias level $w = \tau$.  This satisfies the first condition in \cref{lem:revelation-principle}. 
    \item The indifference constraint (\cref{eq:optimal-scheme-LP-indifference}), together with \cref{lem:posterior-linear-form}, ensures that: whenever $a \in A\setminus\{a_0\}$ is sent, the agent is indifferent between action $a$ and $a_0$ if the bias level $w = \tau$.  Then, by the optimality constraint (\cref{eq:optimal-scheme-LP-optimality}), we have both $a$ and $a_0$ being optimal actions.  This satisfies the second condition in \cref{lem:revelation-principle}.  
\end{itemize}

We then argue that the solution of the linear program is the optimal signaling scheme that satisfies the conditions of \cref{lem:revelation-principle}.  According to our argument after \cref{lem:revelation-principle}, only the signals in $A\setminus\{a_0\}$ are useful signals, so the sample complexity is equal to the expected number of time steps until a signal in $A\setminus\{a_0\}$ is sent.  The probability that a signal in $A\setminus\{a_0\}$ is sent at each time step is
\begin{equation*}
    \sum_{a \in A\setminus\{a_0\}} \pi(a) =  \sum_{a \in A\setminus\{a_0\}} \sum_{\theta \in \Theta} \mu_0(\theta) \pi(a | \theta). 
\end{equation*}
The expected number of time steps is the inverse $\frac{1}{\sum_{a \in A\setminus\{a_0\}} \sum_{\theta \in \Theta} \mu_0(\theta) \pi(a | \theta)}$ (because the number of time steps is a geometric random variable). 
The linear program maximizes the probability $\sum_{a \in A\setminus\{a_0\}} \sum_{\theta \in \Theta} \mu_0(\theta) \pi(a | \theta)$, so it minimizes the sample complexity. 
\end{proof}

\section{Geometric Characterization of the Testability of Bias}
\label{sec:geo_char}
\begin{figure}[t]
    \centering
    \begin{subfigure}[b]{0.32\textwidth}
        \tdplotsetmaincoords{60}{135}
\def\lineThickness{0.8pt}

  \begin{tikzpicture}[>=stealth,tdplot_main_coords, scale=\tikzscale, transform shape]
  
    \def\translationDistance{0.4} 

    \def\distancePercent{0.2} 
     \def\pointPositionA{0.65} 
    \def\pointPositionD{0.15} 
    \def\pointPositionE{0.25} 
    \def\xA{0.3} 
    \def\yA{0.25} 
    
    \def\xD{0.23} 
    \def\yD{0.8} 
    
    \def\xE{0.2} 
    \def\yE{0.6} 
    
    \definecolor{my_purple}{RGB}{150,50,150}
    \definecolor{my_green}{RGB}{0,128,0}
    \definecolor{my_orange}{RGB}{255,165,0}

    \coordinate (O) at (0,0,0);
    \coordinate[label=above right:{$(0,0,1)$}] (A) at (0,0,5);
    \coordinate[label=above left:{$(1,0,0)$}] (D) at (5,0,0);
    \coordinate[label=above right:{$(0,1,0)$}] (E) at (0,5,0);

    \draw[->, opacity=0.5] (O) -- (6,0,0) node[anchor=north east]{$x$};
    \draw[->, opacity=0.5] (O) -- (0,6,0) node[anchor=north west]{$y$};
    \draw[->, opacity=0.5] (O) -- (0,0,6) node[anchor=south]{$z$};

    \draw[thick, opacity=0.9, line width=\lineThickness] (A)--(D)--(E)--cycle;

    \coordinate (centroid) at ($(barycentric cs:A=1,D=1,E=1)+(0.5,0,1)$);
\fill[my_purple] (centroid) circle (3pt) node[below, yshift=-5pt] {\fontsize{18}{14}\selectfont $\mu_0$};

\draw[thick, my_orange, line width=\lineThickness] ($(A)!\xA!(D)$) -- ($(A)!\yA!(E)$);
\fill[my_purple] ($($(A)!\xA!(D)$)!{\pointPositionA}!($(A)!\yA!(E)$)$)  circle (3pt);

\coordinate (VectorA) at ($($(A)!\xA!(D)$) - (A)$);
\coordinate (VectorB) at ($($(A)!\yA!(E)$) - (A)$);

\coordinate (StartGreenA) at ($(A)!\xA!(D) + \translationDistance*(VectorA)$);
\coordinate (EndGreenA) at ($(A)!\yA!(E) + \translationDistance*(VectorB)$);

\draw[thick, my_green, line width=\lineThickness] ($(StartGreenA) - 0.2*(VectorA)$) -- ($(EndGreenA) - 0.2*(VectorB)$);

\draw[thick, my_orange, line width=\lineThickness] ($(D)!\xD!(A)$) -- ($(D)!\yD!(E)$);
\fill[my_purple] ($($(D)!\xD!(A)$)!{\pointPositionD}!($(D)!\yD!(E)$)$)  circle (3pt);

\coordinate (VectorD1) at ($($(D)!\xD!(A)$) - (D)$);
\coordinate (VectorD2) at ($($(D)!\yD!(E)$) - (D)$);

\coordinate (StartGreenD) at ($(D)!\xD!(A) + \translationDistance*(VectorD1)$);
\coordinate (EndGreenD) at ($(D)!\yD!(E) + \translationDistance*(VectorD2)$);

\draw[thick, my_green, line width=\lineThickness] ($(StartGreenD) - 0.2*(VectorD1)$) -- ($(EndGreenD) - 0.2*(VectorD2)$);

\draw[thick, my_orange, line width=\lineThickness] ($(E)!\xE!(A)$) -- ($(E)!\yE!(D)$);
\fill[my_purple] ($($(E)!\xE!(A)$)!{\pointPositionE}!($(E)!\yE!(D)$)$)  circle (3pt);

\coordinate (VectorE1) at ($($(E)!\xE!(A)$) - (E)$);
\coordinate (VectorE2) at ($($(E)!\yE!(D)$) - (E)$);

\coordinate (StartGreenE) at ($(E)!\xE!(A) + \translationDistance*(VectorE1)$);
\coordinate (EndGreenE) at ($(E)!\yE!(D) + \translationDistance*(VectorE2)$);

\draw[thick, my_green, line width=\lineThickness] ($(StartGreenE) - 0.2*(VectorE1)$) -- ($(EndGreenE) - 0.2*(VectorE2)$);

\tikzset{dashed_purple/.style={dashed, line width=0.8pt, color=my_purple}}

\draw[dashed_purple, opacity=0.4] ($($(A)!\xA!(D)$)!{\pointPositionA}!($(A)!\yA!(E)$)$) 
    -- ($($(D)!\xD!(A)$)!{\pointPositionD}!($(D)!\yD!(E)$)$) 
    -- ($($(E)!\xE!(A)$)!{\pointPositionE}!($(E)!\yE!(D)$)$)
    -- cycle; 



 \def\verticalShift{0.0}
\def\verticalShift{0.0} 
\node (tau) at ($(A)!0.3!(E) + (-0.4,+2.2)$) {\fontsize{18}{14}\selectfont $I_a$}; \coordinate (IntersectOrangeA) at ($(A)!\xA!(E) + (0,0,\verticalShift) + (-0.1,0.2)$); 
\draw[->, thick] (tau) .. controls (-5.1, -2.3) .. (IntersectOrangeA);

  \end{tikzpicture}
        \caption{A single sample}
        \label{fig:1}
    \end{subfigure}
    \hfill
    \begin{subfigure}[b]{0.32\textwidth}
        \def\lineThickness{0.8pt}    
\tdplotsetmaincoords{60}{135}

  \begin{tikzpicture}[>=stealth,tdplot_main_coords, scale=\tikzscale, transform shape]
  
    \def\translationDistance{0.4} 

\def\distancePercent{0.2} 
     \def\pointPositionA{0.65} 
    \def\pointPositionD{0.15} 
    \def\pointPositionE{0.25} 
    
    \def\xA{0.3} 
    \def\yA{0.25} 
    
    \def\xD{0.23} 
    \def\yD{0.8} 
    
    \def\xE{0.2} 
    \def\yE{0.6} 
    
    \definecolor{my_purple}{RGB}{150,50,150}
    \definecolor{my_green}{RGB}{0,128,0}
    \definecolor{my_orange}{RGB}{255,165,0}

    \coordinate (O) at (0,0,0);
    \coordinate[label=above right:{$(0,0,1)$}] (A) at (0,0,5);
    \coordinate[label=above left:{$(1,0,0)$}] (D) at (5,0,0);
    \coordinate[label=above right:{$(0,1,0)$}] (E) at (0,5,0);

    \draw[->, opacity=0.5] (O) -- (6,0,0) node[anchor=north east]{$x$};
    \draw[->, opacity=0.5] (O) -- (0,6,0) node[anchor=north west]{$y$};
    \draw[->, opacity=0.5] (O) -- (0,0,6) node[anchor=south]{$z$};

    \draw[thick, opacity=0.9, line width=\lineThickness] (A)--(D)--(E)--cycle;

    \coordinate (centroid) at ($(barycentric cs:A=1,D=1,E=1)+(0.5,0,1)$);
\fill[my_purple] (centroid) circle (3pt) node[below, yshift=-5pt] {\fontsize{18}{14}\selectfont $\mu_0$};

\coordinate (VectorA) at ($($(A)!\xA!(D)$) - (A)$);
\coordinate (VectorB) at ($($(A)!\yA!(E)$) - (A)$);

\coordinate (StartGreenA) at ($(A)!\xA!(D) + \translationDistance*(VectorA)$);
\coordinate (EndGreenA) at ($(A)!\yA!(E) + \translationDistance*(VectorB)$);

\draw[thick, my_green, line width=\lineThickness] ($(StartGreenA) - 0.2*(VectorA)$) -- ($(EndGreenA) - 0.2*(VectorB)$);

\def\verticalShiftA{1} 

\coordinate (OriginalStartA) at ($(A)!\xA!(D) + (0,0,\verticalShiftA)$);
\coordinate (OriginalEndA) at ($(A)!\yA!(E) + (0,0,\verticalShiftA)$);

\path let \p1 = ($(OriginalEndA) - (OriginalStartA)$),
              \n2 = {veclen(\x1,\y1)} in
          coordinate (DirVectorA) at (\x1/\n2,\y1/\n2); 

\draw[thick, my_orange, line width=\lineThickness] 
    ($(OriginalStartA) +16.5*(DirVectorA)$) -- 
    ($(OriginalEndA) -14*(DirVectorA)$);

\fill[my_purple] 
    ($($(OriginalStartA) +16.5*(DirVectorA)$)!{\pointPositionA}!($(OriginalEndA) -14*(DirVectorA)$)$) 
     circle (3pt);


    \def\verticalShift{2} 

    \def\xD{0.23} 
    \def\yD{0.8}  

    \coordinate (OriginalStart) at ($(D)!\xD!(A) - (0,0,\verticalShift)$);
    \coordinate (OriginalEnd) at ($(D)!\yD!(E) - (0,0,\verticalShift)$);

    \path let \p1 = ($(OriginalEnd) - (OriginalStart)$),
              \n2 = {veclen(\x1,\y1)} in
          coordinate (DirVector) at (\x1/\n2,\y1/\n2); 

    \draw[thick, my_orange, line width=\lineThickness] 
        ($(OriginalStart) - 50*(DirVector)$) -- 
        ($(OriginalEnd) -100*(DirVector)$);

\coordinate (VectorD1) at ($($(D)!\xD!(A)$) - (D)$);
\coordinate (VectorD2) at ($($(D)!\yD!(E)$) - (D)$);

\coordinate (StartGreenD) at ($(D)!\xD!(A) + \translationDistance*(VectorD1)$);
\coordinate (EndGreenD) at ($(D)!\yD!(E) + \translationDistance*(VectorD2)$);

\draw[thick, my_green, line width=\lineThickness] ($(StartGreenD) - 0.2*(VectorD1)$) -- ($(EndGreenD) - 0.2*(VectorD2)$);

\def\verticalShiftE{2} 

\coordinate (OriginalStartE) at ($(E)!\xE!(A) - (0,0,\verticalShiftE)$);
\coordinate (OriginalEndE) at ($(E)!\yE!(D) - (0,0,\verticalShiftE)$);

\path let \p1 = ($(OriginalEndE) - (OriginalStartE)$),
              \n2 = {veclen(\x1,\y1)} in
          coordinate (DirVectorE) at (\x1/\n2,\y1/\n2); 

\draw[thick, my_orange, line width=\lineThickness] 
    ($(OriginalStartE) - 60*(DirVectorE)$) -- 
    ($(OriginalEndE) - 100*(DirVectorE)$);

\coordinate (VectorE1) at ($($(E)!\xE!(A)$) - (E)$);
\coordinate (VectorE2) at ($($(E)!\yE!(D)$) - (E)$);

\coordinate (StartGreenE) at ($(E)!\xE!(A) + \translationDistance*(VectorE1)$);
\coordinate (EndGreenE) at ($(E)!\yE!(D) + \translationDistance*(VectorE2)$);

\draw[thick, my_green, line width=\lineThickness] 
    ($(StartGreenE) - 0.2*(VectorE1)$) -- 
    ($(EndGreenE) - 0.2*(VectorE2)$);

 \def\verticalShift{0.0}
\def\verticalShift{0.0} 
\node (tau) at ($(A)!0.3!(E) + (-0.4,+2.2)$) {\fontsize{18}{14}\selectfont $I_a$}; \coordinate (IntersectOrangeA) at ($(A)!\xA!(E) + (0,0,\verticalShift) + (-0.1,0.2)$); 
\draw[->, thick] (tau) .. controls (-5.1, -2.3) .. (IntersectOrangeA);



 \def\verticalShift{1.2}

\node (tau) at ($(A)!0.3!(D) + (+0.3,-0.7)$) {\fontsize{18}{14}\selectfont $ I_{a, \tau}$};  
\coordinate (IntersectOrangeA) at ($(A)!\xA!(D) + (0,0,\verticalShift) + (-0.4,0.8)$);  
\draw[->, thick] (tau) .. controls (-4.1, -4.4)  .. (IntersectOrangeA); 

  \end{tikzpicture}
        \caption{Finite sample complexity}
        \label{fig:2}
    \end{subfigure}
    \hfill
    \begin{subfigure}[b]{0.32\textwidth}
        \def\lineThickness{0.8pt}    
\tdplotsetmaincoords{60}{135}

  \begin{tikzpicture}[>=stealth,tdplot_main_coords, scale=\tikzscale, transform shape]
  
    \def\translationDistance{0.4} 

    \def\distancePercent{0.2} 
     \def\pointPositionA{0.65} 
    \def\pointPositionD{0.15} 
    \def\pointPositionE{0.25} 
    
    \def\xA{0.3} 
    \def\yA{0.25} 
    
    \def\xD{0.23} 
    \def\yD{0.8} 
    
    \def\xE{0.2} 
    \def\yE{0.6} 
    
    \definecolor{my_purple}{RGB}{150,50,150}
    \definecolor{my_green}{RGB}{0,128,0}
    \definecolor{my_orange}{RGB}{255,165,0}

    \coordinate (O) at (0,0,0);
    \coordinate[label=above right:{$(0,0,1)$}] (A) at (0,0,5);
    \coordinate[label=above left:{$(1,0,0)$}] (D) at (5,0,0);
    \coordinate[label=above right:{$(0,1,0)$}] (E) at (0,5,0);

    \draw[->, opacity=0.5] (O) -- (6,0,0) node[anchor=north east]{$x$};
    \draw[->, opacity=0.5] (O) -- (0,6,0) node[anchor=north west]{$y$};
    \draw[->, opacity=0.5] (O) -- (0,0,6) node[anchor=south]{$z$};

    \draw[thick, opacity=0.9, line width=\lineThickness] (A)--(D)--(E)--cycle;

    \coordinate (centroid) at ($(barycentric cs:A=1,D=1,E=1)+(0.5,0,1)$);
\fill[my_purple] (centroid) circle (3pt) node[below, yshift=-5pt] {\fontsize{18}{14}\selectfont $\mu_0$};

\coordinate (VectorA) at ($($(A)!\xA!(D)$) - (A)$);
\coordinate (VectorB) at ($($(A)!\yA!(E)$) - (A)$);

\coordinate (StartGreenA) at ($(A)!\xA!(D) + \translationDistance*(VectorA)$);
\coordinate (EndGreenA) at ($(A)!\yA!(E) + \translationDistance*(VectorB)$);

\draw[thick, my_green, line width=\lineThickness] ($(StartGreenA) - 0.2*(VectorA)$) -- ($(EndGreenA) - 0.2*(VectorB)$);

\def\verticalShiftA{1.95} 

\coordinate (OriginalStartA) at ($(A)!\xA!(D) + (0,0,\verticalShiftA)$);
\coordinate (OriginalEndA) at ($(A)!\yA!(E) + (0,0,\verticalShiftA)$);

\path let \p1 = ($(OriginalEndA) - (OriginalStartA)$),
              \n2 = {veclen(\x1,\y1)} in
          coordinate (DirVectorA) at (\x1/\n2,\y1/\n2); 

\draw[thick, my_orange, line width=\lineThickness] 
    ($(OriginalStartA) -5.5*(DirVectorA)$) -- 
    ($(OriginalEndA) +14*(DirVectorA)$);


    \def\verticalShift{2.1} 

    \def\xD{0.23} 
    \def\yD{0.8}  

    \coordinate (OriginalStart) at ($(D)!\xD!(A) - (0,0,\verticalShift)$);
    \coordinate (OriginalEnd) at ($(D)!\yD!(E) - (0,0,\verticalShift)$);

    \path let \p1 = ($(OriginalEnd) - (OriginalStart)$),
              \n2 = {veclen(\x1,\y1)} in
          coordinate (DirVector) at (\x1/\n2,\y1/\n2); 

    \draw[thick, my_orange, line width=\lineThickness] 
        ($(OriginalStart) - 50*(DirVector)$) -- 
        ($(OriginalEnd) -100*(DirVector)$);

\coordinate (VectorD1) at ($($(D)!\xD!(A)$) - (D)$);
\coordinate (VectorD2) at ($($(D)!\yD!(E)$) - (D)$);

\coordinate (StartGreenD) at ($(D)!\xD!(A) + \translationDistance*(VectorD1)$);
\coordinate (EndGreenD) at ($(D)!\yD!(E) + \translationDistance*(VectorD2)$);

\draw[thick, my_green, line width=\lineThickness] ($(StartGreenD) - 0.2*(VectorD1)$) -- ($(EndGreenD) - 0.2*(VectorD2)$);

\def\verticalShiftE{2.2} 

\coordinate (OriginalStartE) at ($(E)!\xE!(A) - (0,0,\verticalShiftE)$);
\coordinate (OriginalEndE) at ($(E)!\yE!(D) - (0,0,\verticalShiftE)$);

\path let \p1 = ($(OriginalEndE) - (OriginalStartE)$),
              \n2 = {veclen(\x1,\y1)} in
          coordinate (DirVectorE) at (\x1/\n2,\y1/\n2); 

\draw[thick, my_orange, line width=\lineThickness] 
    ($(OriginalStartE) - 60*(DirVectorE)$) -- 
    ($(OriginalEndE) - 100*(DirVectorE)$);

\coordinate (VectorE1) at ($($(E)!\xE!(A)$) - (E)$);
\coordinate (VectorE2) at ($($(E)!\yE!(D)$) - (E)$);

\coordinate (StartGreenE) at ($(E)!\xE!(A) + \translationDistance*(VectorE1)$);
\coordinate (EndGreenE) at ($(E)!\yE!(D) + \translationDistance*(VectorE2)$);

\draw[thick, my_green, line width=\lineThickness] 
    ($(StartGreenE) - 0.2*(VectorE1)$) -- 
    ($(EndGreenE) - 0.2*(VectorE2)$);

 \def\verticalShift{0.0}
\def\verticalShift{0.0} 
\node (tau) at ($(A)!0.3!(E) + (-0.4,+2.2)$) {\fontsize{18}{14}\selectfont $I_a$}; \coordinate (IntersectOrangeA) at ($(A)!\xA!(E) + (0,0,\verticalShift) + (-0.1,0.2)$); 
\draw[->, thick] (tau) .. controls (-5.1, -2.3) .. (IntersectOrangeA);

 \def\verticalShift{1.2}
\node (tau) at ($(A)!0.3!(D) + (+1,-1.)$) {\fontsize{18}{14}\selectfont $ I_{a, \tau} = \emptyset$};  
\coordinate (IntersectOrangeA) at ($(A)!\xA!(D) + (0,0,\verticalShift) + (-0.9,-0.5)$);  


\draw[->, thick] (tau) .. controls (-2.5, -6.2)  .. (IntersectOrangeA);

  \end{tikzpicture}
        \caption{Cannot be solved}
        \label{fig:3}
    \end{subfigure}
    \caption{The three qualitatively different cases for detecting the level of bias, each illustrated within a simplex over three states where $\mu_0$ is the prior belief. Each point in the simplex corresponds to an optimal action for the agent. Green curves indicate indifference between the default action $a_0$ and another action under an unbiased belief. Orange curves are translated versions of these indifference curves; a posterior on these curves means the agent's biased belief (at bias level $\tau$) aligns with the green curves. 
    From (a) to (c), $\tau$ increases, translating the orange curves further. 
    In \cref{fig:1}, $\mu_0$ can be represented as a convex combination of points on the translated curves, allowing bias level detection with a single sample. In \cref{fig:2}, only some signals are useful, requiring more than one sample in the worst case. In \cref{fig:3}, the bias level cannot be tested against $\tau$.}
    \label{fig:geometric_characterization}
\end{figure}

To complement the algorithmic solution presented in the previous section, this section provides a geometric characterization of the bias detection problem.  We identify the conditions under which testing whether $w\ge \tau$ or $w\le \tau$ can be done in only \emph{one} sample, in finite number of samples, or cannot be done at all (which is the scenario where the linear program in \cref{alg:sample-complexity-LP} is infeasible).

By \cref{assump:unique-default-action} ($a_0$ is strictly better than other actions at prior $\mu_0$), we have:
\begin{equation*}\label{eq:prior-a0-strictly-optimal}
    c_a^\top \mu_0 = \sum_{\theta\in\Theta} \mu_0(\theta)\big( U(a_0, \theta) - U(a, \theta) \big) > 0, \quad \forall a \in A\setminus\{a_0\},
\end{equation*}
where $c_a$ is as defined in \cref{eq::delta_utility}. Define \(I_a\) as the set of indifference beliefs between action \(a\) and \(a_0\), which is the intersection of the hyperplane \(\{x | c_a^\top x = 0\}\) and the probability simplex $\Delta(\Theta)$:
\begin{equation*}
    I_a := \{\mu \in \Delta(\Theta) \mid c_a^\top \mu = 0\}.
\end{equation*}

Given a parameter \(\tau \in (0, 1)\), for which we want to test whether \(w \ge \tau\) or \(w \le \tau\), let
\[I_{a, \tau} = \{ \mu \in \Delta(\Theta) \mid (1-\tau) \mu + \tau \mu_0 \in I_a \}\]
be the set of posterior beliefs for which, if the agent's bias level is exactly \(\tau\), then the agent's biased belief will fall within the indifference set \(I_a\).

\begin{lemma}\label{lem:I_a_tau_translation}
\(I_{a, \tau}\) is equal to the intersection of the probability simplex \(\Delta(\Theta)\) and a translation of the hyperplane \(\{x \mid c_a^\top x = 0\}\): $I_{a, \tau} = \big\{ \mu \in \Delta(\Theta) \mid c_a^\top \mu = -\frac{\tau}{1-\tau} c_a^\top \mu_0 \big\}$.
\end{lemma}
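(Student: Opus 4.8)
The plan is to unwind the definition of $I_{a,\tau}$ directly and reduce the defining condition to a single linear equation in $\mu$. First I would observe that for any $\mu \in \Delta(\Theta)$ and any $\tau \in (0,1)$, the point $(1-\tau)\mu + \tau\mu_0$ is a convex combination of two probability distributions and hence lies in $\Delta(\Theta)$ automatically. Consequently, in the condition ``$(1-\tau)\mu + \tau\mu_0 \in I_a$'' from the definition of $I_{a,\tau}$, the simplex-membership part of $I_a$ is vacuous, and the only binding requirement is the hyperplane constraint $c_a^\top\big((1-\tau)\mu + \tau\mu_0\big) = 0$.

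Next I would expand this constraint by linearity of the inner product: $(1-\tau)\,c_a^\top\mu + \tau\,c_a^\top\mu_0 = 0$. Since $\tau < 1$ we may divide by $1-\tau > 0$ to obtain the equivalent equation $c_a^\top\mu = -\frac{\tau}{1-\tau}\,c_a^\top\mu_0$. Intersecting with the (automatically satisfied) requirement $\mu \in \Delta(\Theta)$ yields exactly $I_{a,\tau} = \big\{\mu \in \Delta(\Theta) \mid c_a^\top\mu = -\frac{\tau}{1-\tau}\,c_a^\top\mu_0\big\}$. This is the intersection of $\Delta(\Theta)$ with the hyperplane $\{x \mid c_a^\top x = -\frac{\tau}{1-\tau}\,c_a^\top\mu_0\}$, which has the same normal vector $c_a$ as $\{x \mid c_a^\top x = 0\}$ but a shifted right-hand side, i.e.\ a translate of the original hyperplane, as claimed.

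There is essentially no obstacle here: the statement is a one-line algebraic identity once the definitions are unpacked, and the only point deserving explicit mention is that translating by $\tau\mu_0$ and scaling by $1-\tau$ keeps us inside the simplex, so no spurious constraint is introduced and $I_{a,\tau}$ really is a clean simplex slice. For the benefit of the geometric picture in \cref{fig:geometric_characterization}, I would also note that by \cref{assump:unique-default-action} we have $c_a^\top\mu_0 > 0$, so the right-hand side $-\frac{\tau}{1-\tau}\,c_a^\top\mu_0$ is negative and strictly decreasing in $\tau$; hence as $\tau$ grows the translated hyperplane moves away from $\mu_0$ toward the region $\{c_a^\top\mu < 0\}$ where action $a$ is preferred, which is precisely the behavior of the orange curves depicted in the figure.
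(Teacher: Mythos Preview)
Your proposal is correct and follows essentially the same approach as the paper's own proof: observe that $(1-\tau)\mu+\tau\mu_0\in\Delta(\Theta)$ by convexity, then expand $c_a^\top((1-\tau)\mu+\tau\mu_0)=0$ linearly and divide by $1-\tau$. Your additional remark about $c_a^\top\mu_0>0$ and the monotone shift of the hyperplane in $\tau$ is extra commentary not in the paper's proof, but it is correct and nicely ties the lemma to the figure.
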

The proof of this lemma is in \cref{proof:I_a_tau_translation}. 
With this representation of $I_{a,\tau}$ in hand, we can now present a geometric characterization of the testability of bias.

\begin{theorem}[geometric characterization]\label{thm:geometric-characterization}
Fix $\tau \in (0, 1)$. The problem of testing $w \ge \tau$ or $w\le \tau$
\begin{itemize}[itemsep=2pt, topsep=0pt, parsep=0pt, partopsep=0pt, leftmargin=15pt]
\item Can be solved with a \emph{single sample} (the sample complexity is 1) if and only if the prior $\mu_0$ is in the convex hull formed by the translated sets \(I_{a, \tau}\) for all non-default actions \(a \in A\setminus \{a_0\}\):
i.e.,
$\mu_0 \in \mathrm{ConvexHull} \big( ~ \bigcup_{a \in A\setminus\{a_0\}} I_{a, \tau} ~ \big)$.
\item Can be solved (with finite sample complexity) if and only if $I_{a, \tau} \ne \emptyset$ for at least one $a \in A\setminus\{a_0\}$. 
\item Cannot be solved if $I_{a, \tau} = \emptyset$ for all $a \in A\setminus\{a_0\}$. 
\end{itemize}
\end{theorem}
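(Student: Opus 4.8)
The plan is to express all three claims in terms of the optimal value $p^*$ of the linear program in \cref{alg:sample-complexity-LP}. By \cref{thm:linear-program} the optimal adaptive sample complexity equals $1/p^*$, and $p^*$ is attained because the LP's feasible region is compact; since any algorithm needs at least one sample, ``sample complexity $1$'' is equivalent to $p^* = 1$, and ``solvable with finite sample complexity'' is equivalent to $p^* > 0$. Throughout I would assume $\mu_0$ has full support (otherwise restrict $\Theta$ to $\mathrm{supp}(\mu_0)$, since every induced posterior vanishes outside it), so that a direct scheme $\pi$ and its induced posteriors $\{\mu_a\}$ correspond via $\mu_0 = \sum_a \pi(a)\mu_a$ and $\pi(a \mid \theta) = \pi(a)\mu_a(\theta)/\mu_0(\theta)$ (\cref{lem:splitting}). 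The one geometric tool I will reuse is \emph{push-to-boundary}: if a posterior $\mu$ satisfies $\mu \in I_{a,\tau}$, then by \cref{lem:I_a_tau_translation} its $\tau$-biased belief $\mu^\tau$ lies on the hyperplane $\{c_a^\top x = 0\}$ and hence $\mu^\tau \notin R_{a_0}$; as $\mu_0 \in R_{a_0}$ and $R_{a_0}$ is open and convex, the segment $\beta \mapsto (1-\beta)\mu + \beta\mu_0$ moves $\mu^\tau$ continuously to $\mu_0$ and thus hits $\partial R_{a_0}$ at some $\beta \in [0,1)$, at a point where some action $a^\star \in A \setminus \{a_0\}$ ties with $a_0$ and both are weakly optimal; the replacement $\mu' = (1-\beta)\mu + \beta\mu_0$ is again a valid posterior and satisfies $\mu = \tfrac{1}{1-\beta}(\mu' - \beta\mu_0)$.

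\textbf{Single sample.} I would prove $p^* = 1 \iff \mu_0 \in \mathrm{ConvexHull}\big(\bigcup_{a \ne a_0} I_{a,\tau}\big)$. For the forward direction, an optimal LP solution of value $1$ has $\pi(a_0) = 0$, so $\mu_0 = \sum_{a \ne a_0,\,\pi(a)>0} \pi(a)\mu_a$; the indifference constraint \cref{eq:optimal-scheme-LP-indifference} (via \cref{lem:posterior-linear-form}) gives $(\mu_a)^\tau \in I_a$, i.e.\ $\mu_a \in I_{a,\tau}$, so $\mu_0$ lies in the stated convex hull. For the converse, write $\mu_0 = \sum_k \lambda_k \mu_k$ with $\lambda_k > 0$ and $\mu_k \in I_{a(k),\tau}$, apply push-to-boundary to each $\mu_k$ to obtain $\mu_k'$ with $(\mu_k')^\tau \in \partial R_{a_0}$ and tying action $a_k^\star$, substitute $\mu_k = \tfrac{1}{1-\beta_k}(\mu_k' - \beta_k\mu_0)$ back into $\mu_0 = \sum_k \lambda_k \mu_k$, and renormalize; a short computation gives $\mu_0 = \sum_k \lambda_k' \mu_k'$ with positive weights $\lambda_k' = \tfrac{\lambda_k/(1-\beta_k)}{\sum_j \lambda_j/(1-\beta_j)}$ summing to $1$. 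Merging the $\mu_k'$ that share a common label $a_k^\star$ and reading off the corresponding direct scheme yields $\pi(a_0) = 0$, with each non-default signal $a$ having $(\mu_a)^\tau$ on $I_a$ (so \cref{eq:optimal-scheme-LP-indifference} holds) and in the convex polytope of beliefs where $a$ is weakly optimal (so \cref{eq:optimal-scheme-LP-optimality} holds, being a convex combination of boundary points); hence the LP is feasible with value $1$ and $p^* = 1$.

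\textbf{Finite versus infinite sample complexity.} I would prove $p^* > 0 \iff I_{a,\tau} \ne \emptyset$ for some $a \ne a_0$, the third bullet being the contrapositive of the ``if'' direction. If every $I_{a,\tau}$ is empty, then in any feasible LP solution \cref{eq:optimal-scheme-LP-indifference} forces $\pi(a) = 0$ for all $a \ne a_0$ (else the induced $\mu_a$ would lie in $I_{a,\tau}$), so the objective is $0$, $p^* = 0$, and by \cref{thm:linear-program} the sample complexity is infinite --- equivalently, $I_{a,\tau} = \emptyset$ for all $a$ forces $c_a^\top \mu_s^\tau > 0$ for every signal $s$, so every signal is internal in the sense of \cref{def:boundary-internal-external} and \cref{lem:internal-not-useful} shows no algorithm terminates. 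Conversely, if $\mu_a \in I_{a,\tau}$, apply push-to-boundary to get $\mu_a'$ with tying action $a^\star$, choose $q > 0$ small enough that $\mu'' := \tfrac{1}{1-q}(\mu_0 - q\mu_a')$ is a valid posterior with $(\mu'')^\tau \in R_{a_0}$ (possible since $\mu_0^\tau = \mu_0 \in R_{a_0}$, which is open, and $\mu_0$ is interior), and take the direct scheme that sends $a^\star$ with probability $q$ (posterior $\mu_a'$) and $a_0$ with probability $1-q$ (posterior $\mu''$); this is LP-feasible with objective $q > 0$, so $p^* \ge q > 0$ and $1/p^*$ is finite.

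\textbf{Where the difficulty lies.} The ``only if''/infeasibility directions and \cref{lem:I_a_tau_translation} are routine --- they merely unpack the LP constraints. The crux is the ``if'' direction of the single-sample case: turning the purely geometric hypothesis that $\mu_0$ is in the convex hull of the $I_{a,\tau}$'s --- whose witnesses may have $\tau$-biased beliefs in $\ext R_{a_0}$ rather than on $\partial R_{a_0}$, and may be attached to the ``wrong'' actions --- into an honest direct signaling scheme all of whose non-default signals are boundary signals and which satisfies both the indifference and the optimality constraints. The delicate checks are that push-to-boundary preserves Bayesian consistency with $\mu_0$ after renormalization, and that merging signals with a common tying action keeps each merged posterior inside the convex region where that action is weakly optimal.
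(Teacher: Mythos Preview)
Your proposal is correct and mirrors the paper's proof closely: both directions of Part~1 and the ``only if'' of Part~2 are argued exactly as in the paper (push the convex-hull witnesses toward $\mu_0$ until their $\tau$-biased beliefs hit $\partial R_{a_0}$, renormalize via the splitting lemma, and for the converse read $\mu_a\in I_{a,\tau}$ directly off the LP indifference constraint), and your framing through $p^*$ is just a restatement of the paper's sample-complexity language. The only place you diverge is the ``if'' direction of Part~2: the paper first proves that some simplex \emph{vertex} $e_{\theta_1}$ has $\tau$-biased belief outside $R_{a_0}$ and uses that vertex as one posterior in a fully revealing scheme, whereas you take any point of the nonempty $I_{a,\tau}$, push it to the boundary, and assign it a small weight $q$ so that the leftover posterior $\mu''$ stays in $R_{a_0}$ --- both constructions produce a boundary signal with positive probability, and your version has the virtue of reusing the same push-to-boundary machinery as Part~1.
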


\cref{fig:geometric_characterization} illustrates the three cases of \cref{thm:geometric-characterization}.  In the first case, the solution of the linear program in \cref{alg:sample-complexity-LP} satisfies $\sum_{a \in A\setminus\{a_0\}} \sum_{\theta \in \Theta} \pi(a|\theta) \mu_0(\theta) = 1$, meaning that useful signals are sent with probability 1, which allows us to tell whether $w \ge \tau$ or $w\le\tau$ immediately.  In the second case, the total probability of useful signals satisfies $\sum_{a \in A\setminus\{a_0\}} \sum_{\theta \in \Theta} \pi(a|\theta) \mu_0(\theta) < 1$, so the sample complexity is more than 1.  In the third case, the linear program in \cref{alg:sample-complexity-LP} is not feasible, so $w\ge\tau$ or $w\le\tau$ cannot be determined; importantly, this is not a limitation of our particular algorithm, but a general impossibility in our model. 
The proof of \cref{thm:geometric-characterization} is in \cref{proof:geometric-characterization}.
\section{Discussion}
\label{sec:discussion}
Our approach has some limitations; here we discuss the two that we view as most significant. 

First, we have assumed a linear model of bias. 
While the linear model is common in the literature \cite{Epstein2010, Hagmann2017, de_clippel_non-bayesian_2022, tang_bayesian_2021}, we also consider a more general model of bias (in \cref{app:general-bias-model}): as the bias level $w$ increases from $0$ to $1$, the agent's belief changes from the true posterior $\mu_s$ to the prior $\mu_0$ according to some general continuous function $\phi(\mu_0, \mu_s, w)$. We show that, as long as the function $\phi$ satisfies a certain single-crossing property (as $w$ increases, once the agent starts to prefer the default action $a_0$, they will not change the preferred action anymore), our results regarding the optimality of constant signaling schemes and the geometric characterization still hold, while the revelation principle and the linear program algorithm no longer work because $\phi$ is not linear. We consider it an interesting challenge to come up with more general models of bias that are still tractable, in the sense that one can efficiently design good signaling schemes with reasonable sample complexity bounds.

Second, we have assumed that the agent's prior is the same as the real prior from which states of the world are drawn. But what if the agent's prior is different? Our results directly extend to the case where the agent has a wrong, \emph{known} prior. If the agent's prior is unknown, then our problem becomes significantly more challenging.
\revision{More generally, the agent may have a private type that determines both their prior and utility and is unknown to the principal. We conjecture that testing the agent's bias in this case becomes impossible, because if different types consistently take ``opposite'' actions, then the actions provide no information about the agent's bias.}

Despite these limitations, we view our paper as making significant progress on a novel problem that seems fundamental. Our results suggest that practical algorithms for detecting bias in belief update are within reach and, in the long term, may lead to new insights on issues of societal importance. In particular, we anticipate future research in more complex situations such as combining  decisions of many experts (human or AI) after measuring and accounting for their individual biases.

\newpage
\section*{Acknowledgments}

This research was partially supported by the National Science Foundation under grants IIS-2147187, IIS-2229881 and CCF-2007080; and by the Office of Naval Research under grants N00014-20-1-2488 and N00014-24-1-2704.

\bibliographystyle{plainnat}
\bibliography{references}

\newpage
\appendix

\section{Missing Proofs from \texorpdfstring{\cref{sec:optimal_signaling_scheme}}{Section \ref{sec:optimal_signaling_scheme}}}

\subsection{Proof of \texorpdfstring{\cref{lem:adaptive-external-to-boundary}}{Lemma \ref{lem:adaptive-external-to-boundary}}}
\label{proof:adaptive-external-to-boundary}

\begin{proof}
Suppose that, during its operation, \(\Pi\) selects a signaling scheme \(\pi\) that includes an external signal \(s \in S\). By definition, for an external signal, the \(\tau\)-biased belief \(\mu_s^\tau = \tau \mu_0 + (1-\tau) \mu_s\) is in \(\ext R_{a_0}\). This implies that the true posterior \(\mu_s\), derived from the signaling scheme \(\pi\) and the prior \(\mu_0\), also lies in \(\ext R_{a_0}\). Consequently, the line segment connecting \(\mu_s\) and \(\mu_0\), represented as \(\{ (1-t)\mu_s + t \mu_0 \mid t \in [0,1] \}\), must intersect the boundary \(\partial R_{a_0}\) at some point. Denote this intersection by \(\mu^* = (1-t^*) \mu_s + t^* \mu_0 \in \partial R_{a_0}\).

We will adjust the original signaling scheme $\pi$.  To do so, define \(\tilde \mu_s\) as the belief whose \(\tau\)-biased version equals \(\mu^*\):
\begin{equation*}
    \tau \mu_0 + (1-\tau) \tilde \mu_s = \mu^*  \quad \iff \quad \tilde \mu_s = \frac{(t^* - \tau) \mu_0 + (1-t^*) \mu_s}{1-\tau}.  
\end{equation*}

Under the original signaling scheme \(\pi\), according to the splitting lemma (\cref{lem:splitting}), the prior \(\mu_0\) can be represented as a convex combination of \(\mu_s\) and the posteriors associated with other signals \(s' \in S \setminus \{s\}\):
\begin{equation*}
    \mu_0 = p_s \mu_s + \sum_{s'\in S\setminus \{s\}} p_{s'} \mu_{s'}. 
\end{equation*}
If we change $\mu_s$ to $\tilde \mu_s$, then we obtain a new convex combination (this is valid because $\tilde \mu_s$ is on the line segment from $\mu_s$ to $\mu_0$):
\begin{equation*}
    \mu_0 = \tilde p_s \tilde \mu_s + \sum_{s' \in S \setminus \{s\}} \tilde p_{s'} \mu_{s'},
\end{equation*}
where
\begin{equation*}
    \tilde p_s = \frac{p_s}{1-t^* + t^* p_s} \quad \text{and} \quad \forall s' \in S \setminus \{s\}, ~ \tilde p_{s'} = \frac{1-t^*}{1-t^* + t^* p_s} p_{s'}.
\end{equation*}
Then, by the splitting lemma (\cref{lem:splitting}), there exists a signaling scheme $\pi'$ with $|S|$ signals where signal $s$ induces posterior $\tilde \mu_s$ and other signals $s'$ induces $\mu_{s'}$. 
Note that the $\tau$-biased version of \(\tilde \mu_s\) satisfies $\tau \mu_0 + (1-\tau) \tilde \mu_s = \mu^* \in \partial R_{a_0}$, so $s$ is a boundary signal under signaling scheme $\pi'$. 

Since $s$ is a boundary signal, we can immediately tell whether $w\ge\tau$ or $w\le \tau$ according to \cref{lemma:boundary-signal} when $s$ is sent and end the algorithm.  If any signal $s'$ other than $s$ is sent, the induced posterior $\mu_s$ is the same as the posterior in the original signaling scheme $\pi$, so the agent will take the same action, and we can just follow the rest of the original algorithm $\Pi$.  But we note that the probability of signal $s$ being sent under the new signaling scheme $\pi'$ is larger than or equal to the probability under the original signaling scheme $\pi$: 
\[
\tilde p_s = \frac{p_s}{1-t^* + t^* p_s} \ge p_s. 
\]
So, in expectation, we can end the algorithm faster by using $\tilde \pi$ than using $\pi$.  Hence, by repeating the above procedure to replace all the signaling schemes in the original algorithm $\Pi$ that use external signals, we obtain a new algorithm $\Pi'$ that only uses boundary and internal signals with smaller or equal sample complexity. 
\end{proof}

\subsection{Proof of \texorpdfstring{\cref{lem:internal-not-useful}}{Lemma \ref{lem:internal-not-useful}}}
\label{proof:internal-not-useful}
\begin{proof}
Let $\Pi$ be any adaptive algorithm using signaling schemes with boundary and internal signals.  Let $\mathcal{H}_t=\{(\pi_1, \theta_1, s_1, a_1), \ldots, (\pi_{t}, \theta_{t}, s_{t}, a_{t})\}$ be any history that can happen during the execution of $\Pi$. If no boundary signal has been sent, then every realized signal $s_k$ is an internal signal in the respective signaling scheme $\pi_k$, with the $\tau$-biased posterior satisfying $\mu_{s_k}^\tau = \tau \mu_0 + (1-\tau) \mu_{s_k} \in R_{a_0}$.  Because $R_{a_0} = \big\{ \mu \in \Delta(\Theta) \mid  \forall a \in A\setminus \{a_0\}, \,  c_a^\top \mu > 0 \big\}$ is an open region, there must exist some $\eps_k > 0$ such that the $\ell_1$-norm ball $B_{\eps_k}(\mu_{s_k}^\tau) = \{\mu \in \Delta(\Theta) : \| \mu - \mu_{s_k}^\tau \|_1 \le \eps_k \}$ is a subset of $R_{a_0}$.  Let $\eps = \min_{k=1}^t \eps_k > 0$.  Then $B_\eps(\mu_{s_k}^\tau) \subseteq R_{a_0}$ for every $k = 1, \ldots, t$.  Suppose the agent's bias level $w$ is in the range $[\tau - \frac{\eps}{2}, \tau + \frac{\eps}{2}]$.  Then, for every  signal $s_k$, the agent's biased belief $\nu_{s_k} = w \mu_0 + (1-w) \mu_{s_k}$ satisfies: 
\begin{align*}
    \| \nu_{s_k} - \mu_{s_k}^\tau \|_1 = \| (w-\tau) (\mu_0 - \mu_{s_k}) \|_1 \le |w-\tau| \cdot \| \mu_0 - \mu_{s_k} \|_1 \le \eps. 
\end{align*}
This means 
\begin{equation*}
    \nu_{s_k} \in B_\eps(\mu_{s_k}^\tau) \subseteq R_{a_0}. 
\end{equation*}
So, the agent should take action $a_0$ given signal $s_k$. Note that this holds for every $k=1, \ldots, t$ and any $w \in [\tau - \frac{\eps}{2}, \tau + \frac{\eps}{2}]$.  So we cannot determine whether $w \ge \tau$ or $w \le \tau$ so far.  We have to run the algorithm until a boundary signal is sent. 
\end{proof}

\subsection{Proof of \texorpdfstring{\cref{lem:revelation-principle}}{Lemma \ref{lem:revelation-principle}}}
\label{app:proof-revelation-principle}

\begin{proof}
Let $\pi$ be a signaling scheme that can test whether $w \ge \tau$ or $w \le \tau$.  According to \cref{lem:adaptive-external-to-boundary}, $\pi$ can be assumed to only use boundary and internal signals.  Recall that a signal $s$ is boundary if the $\tau$-biased belief $\mu_s^\tau = \tau \mu_0 + (1-\tau) \mu_s$ lies on the boundary set $\partial R_{a_0}$.
For $a \in A \setminus \{a_0\}$, let $B_{a}$ be the set of beliefs under which the agent is indifferent between $a$ and $a_0$ and $a$ and $a_0$ are both better than other actions: 
\[
  B_a = \{ \mu \in \Delta(\Theta) \mid c_a^\top \mu = 0 ~ \text{ and } ~ \forall a'\in A, c_{a'}^\top \mu \ge 0 \}. 
\]
The boundary set $\partial R_{a_0}$ can be written as the union of $B_a$ for $a \in A\setminus\{a_0\}$: 
\begin{equation*}
    \partial R_{a_0} = \bigcup_{a \in A\setminus \{a_0\}} B_a. 
\end{equation*}
Then, we classify the boundary signals into $|A|-1$ sets $\{S_a\}_{a\in A\setminus\{a_0\}}$ according to which $B_a$ sets their $\tau$-biased beliefs belong to: namely, the set $S_a$ contains boundary signals $s$ under which
\begin{equation*}
    \tau \mu_0 + (1-\tau) \mu_s \in B_a. 
\end{equation*}
We then \emph{combine} the signals in $S_a$.  Specifically, consider the normalized weighted average of the true posterior beliefs associated with the signals in $S_a$, denoted by $\mu_a$:
\begin{equation*}
    \mu_a = \sum_{s \in S_a} \frac{\pi(s)}{\sum_{s'\in S_a} \pi(s') } \mu_s. 
\end{equation*}
Note that the $\tau$-biased version of $\mu_a$ is also in the set $B_a$ because $B_a$ is a convex set: 
\begin{align*}
    \tau \mu_0 + (1-\tau) \mu_a = \sum_{s\in S_a} \frac{\pi(s)}{\sum_{s'\in S_a} \pi(s')} \Big( \tau \mu_0 + (1-\tau) \mu_s \Big) \in B_a. 
\end{align*}
This means that if a signal $a$ induces true posterior $\mu_a$, then this signal is a boundary signal. 

After defining $\mu_a$ as above for every $a \in A\setminus\{a_0\}$, let's consider the set of internal signals of the signaling scheme $\pi$, which we denote by $S_I$.  For each internal signal $s \in S_I$, the $\tau$-biased belief satisfies 
\begin{equation*}
    \tau \mu_0 + (1-\tau) \mu_s \in R_{a_0}. 
\end{equation*}
Similar to above, we combine all the signals in $S_I$: define $\mu_{a_0}$ to be the normalized weighted average of the posteriors associated with all internal signals:
\begin{equation*}
    \mu_{a_0} = \sum_{s\in S_I} \frac{\pi(s)}{\sum_{s'\in S_I} \pi(s')} \mu_s. 
\end{equation*}
Then, the $\tau$-biased version of $\mu_{a_0}$ must be in $R_{a_0}$ because $R_{a_0}$ is a convex set: 
\begin{equation*}
    \tau \mu_0 + (1-\tau) \mu_{a_0} = \sum_{s\in S_I} \frac{\pi(s)}{\sum_{s'\in S_I} \pi(s')} \Big( \tau \mu_0 + (1-\tau) \mu_s \Big) \in R_{a_0}. 
\end{equation*}
This means that, if a signal induces posterior $\mu_{a_0}$, then this signal is internal. 

From the splitting lemma (\cref{lem:splitting}), we know that the convex combination of the original posteriors $\sum_{s\in S} \pi(s) \mu_s$ is equal to the prior $\mu_0$.  This means that the following convex combination of the new posteriors $\{\mu_a\}_{a\in A\setminus a_0}$ and $\mu_{a_0}$ is also equal to the prior: 
\begin{equation*}
    \sum_{a\in A\setminus a_0} \sum_{s'\in S_a} \pi(s') \mu_a + \sum_{s'\in S_I} \pi(s') \mu_{a_0} = \sum_{a\in A\setminus a_0} \sum_{s\in S_a} \pi(s) \mu_s + \sum_{s\in S_I}\pi(s) \mu_s = \sum_{s\in S} \pi(s) \mu_s = \mu_0 
\end{equation*}
where the convex combination weight of $\mu_a$ is $\sum_{s' \in S_a} \pi(s')$ for every $a \in A\setminus\{a\}$ and the convex combination weight of $\mu_{a_0}$ is $\sum_{s'\in S_I} \pi(s')$.  One can easily verify that the weights sum to 1. 
Then, by the splitting lemma (\cref{lem:splitting}), there exists a signaling scheme $\pi'$ with signal space of size $|A|$ (so we simply denote the signal space by $A$) where each signal $a \in A$ induces posterior $\mu_a$.
We show that this new signaling scheme $\pi'$ satisfies the properties in \cref{lem:revelation-principle}: 
\begin{itemize}
    \item Signal $a_0$ induces posterior $\mu_{a_0}$ whose $\tau$-biased version satisfies $\tau \mu_0 + (1-\tau) \mu_{a_0} \in R_{a_0}$.  So, given signal $a_0$, action $a_0$ is the optimal action for an agent with bias level $\tau$. 
    \item For each signal $a \in A\setminus \{a_0\}$, the induced posterior $\mu_a$ satisfies $\tau \mu_0 + (1-\tau) \mu_a \in B_a \subseteq \partial R_{a_0}$.
    So, by the definition of $B_a$, an agent with bias level $\tau$ is indifferent between actions $a$ and $a_0$ and these two actions are better than other actions. 
    Also, this signal is a boundary signal by \cref{def:boundary-internal-external}, which satisfies the following according to \cref{lemma:boundary-signal}: if the agent's bias level $w<\tau$, then the agent strictly prefers $a$ over $a_0$; if $w > \tau$, then the agent strictly prefers $a_0$ over $a$.
    \item The sample complexity of $\pi'$ is the same as $\pi$ because: (1) the sample complexity is equal to the inverse of the total probability of boundary signals (as a corollary of \cref{lem:internal-not-useful}), and (2) the total probability of boundary signals of the two signaling schemes are the same: 
    \begin{equation*}
        \sum_{a\in A\setminus\{a_0\}} \pi'(a) = \sum_{a\in A\setminus\{a_0\}} \sum_{s'\in S_a} \pi(s') = \sum_{s \in \cup_{a\in A\setminus \{a_0\}} S_a} \pi(s).  
    \end{equation*}
    So, $T_\tau(\pi') = T_\tau(\pi)$. 
\end{itemize}
\end{proof}

\section{Missing Proofs from \texorpdfstring{\cref{sec:geo_char}}{Section \ref{sec:geo_char}}}

\subsection{Proof of \texorpdfstring{\cref{lem:I_a_tau_translation}}{Lemma \ref{lem:I_a_tau_translation}}}
\label{proof:I_a_tau_translation}
\begin{proof}
For \(\mu \in \Delta(\Theta)\), by convexity of \(\Delta(\Theta)\), we have \((1-\tau) \mu + \tau \mu_0 \in \Delta(\Theta)\). Then,
\begin{align*}
    \mu \in I_{a, \tau} \iff 
    (1-\tau) \mu + \tau \mu_0 \in I_a & \iff c_a^\top ((1-\tau) \mu + \tau \mu_0) = 0 \\
    & \iff (1-\tau) c_a^\top \mu + \tau c_a^\top \mu_0 = 0 \\
    & \iff c_a^\top \mu = -\frac{\tau}{1-\tau} c_a^\top \mu_0. 
\end{align*}
\end{proof}

\subsection{Proof of \texorpdfstring{\cref{thm:geometric-characterization}}{Theorem \ref{thm:geometric-characterization}}}
\label{proof:geometric-characterization}

We first prove the first part of \cref{thm:geometric-characterization}, then prove the the second and third parts. 

\subsubsection{Proof of Part 1 of \texorpdfstring{\cref{thm:geometric-characterization}}{Theorem \ref{thm:geometric-characterization}}}
We want to prove that $w\ge\tau$ or $w\le\tau$ can be tested with a single sample \textbf{if and only if} the prior $\mu_0$ is in the convex hull formed by the translated sets \(I_{a, \tau}\) for all non-default actions \(a \in A\setminus \{a_0\}\): $\mu_0 \in \mathrm{ConvexHull} \big( ~ \cup_{a \in A\setminus\{a_0\}} I_{a, \tau} ~ \big)$. 

\paragraph{The ``if'' part.}
Suppose $\mu_0 \in \mathrm{ConvexHull} \big( ~ \cup_{a \in A\setminus\{a_0\}} I_{a, \tau} ~ \big)$, namely, there exist a set of positive weights \(\{p_s\}_{s \in S}\) and a set of posterior beliefs \(\{\mu_s\}_{s \in S}\) such that 
\[
\mu_0 = \sum_{s \in S} p_s \mu_s,
\]
where each $\mu_s \in I_{a, \tau}$ for some $a \in A\setminus \{a_0\}$.
By definition, the $\tau$-biased belief $\tau \mu_0 + (1-\tau) \mu_s$ is in the indifference set $I_a$. Recall the definition of the boundary set $\partial R_{a_0}$ (\cref{eq:boundary-set}), which is the set of beliefs under which the agent is indifferent between $a_0$ and some other action and these two actions are better any other actions.
The $\tau$-biased belief $\tau \mu_0 + (1-\tau) \mu_s \in I_a$ may or may not belong to $\partial R_{a_0}$, depending on whether $a$ and $a_0$ are better than any other actions:
\begin{itemize}
    \item If $\tau \mu_0 + (1-\tau) \mu_s \in \partial R_{a_0}$, then $s$ is a boundary signal (by \cref{def:boundary-internal-external}) and hence useful for testing whether $w \ge \tau$ or $w \le \tau$ (\cref{lemma:boundary-signal}).  Denote $\mu_s' = \mu_s$ in this case. 
    \item If $\tau \mu_0 + (1-\tau) \mu_s \notin \partial R_{a_0}$, then there must exist some action $a'$ that is strictly better than $a$ and $a_0$ for the agent at the $\tau$-biased belief, hence $\tau \mu_0 + (1-\tau) \mu_s \in \ext R_{a_0}$ (so $s$ is an external signal). Then, according to the argument in \cref{lem:adaptive-external-to-boundary}, we can find another belief $\mu_s'$ on the line segment between $\mu_s$ and $\mu_0$ such that the $\tau$-biased version of $\mu_s'$ lies exactly on the boundary set $\partial R_{a_0}$: 
\begin{equation*}
    \tau \mu_0 + (1-\tau) \mu_s' \in \partial R_{a_0}, ~~~~ \mu_s' = t \mu_s + (1-t)\mu_0 \text{ for some $t\in[0, 1]$}. 
\end{equation*}
\end{itemize}
After the above discussion, we have found a $\mu_s'$ that is either equal to $\mu_s$ or on the line segment between $\mu_s$ and $\mu_0$, for every $s \in S$. So, $\mu_0$ can be written as a convex combination of $\{\mu_s'\}_{s\in S}$: 
\begin{equation*}
    \mu_0 = \sum_{s\in S} p_s' \mu_s'. 
\end{equation*}
Moreover, the $\mu_s'$ defined above satisfies $\tau \mu_0 + (1-\tau) \mu_s' \in \partial R_{a_0}$. So, a signal inducing true posterior $\mu_s'$ will be a boundary signal and useful for testing $w \ge \tau$ or $w \le \tau$ (\cref{lemma:boundary-signal}). 
Finally, by the splitting lemma (\cref{lem:splitting}), we know that there must exist a signaling scheme $\pi'$ with signal space $S$ where each signal $s \in S$ indeed induces posterior $\mu_s'$.  Such a signaling scheme sends useful (boundary) signals with probability 1.  Hence, the sample complexity of it is 1. 

\paragraph{The ``only if'' part.}
Suppose whether $w\ge \tau$ or $w\le\tau$ can be tested with a single sample.  This means that the optimal signaling scheme obtained from the linear program in \cref{alg:sample-complexity-LP} must satisfy $\sum_{a\in A\setminus\{a_0\}} \pi(a) = \sum_{a \in A\setminus\{a_0\}} \sum_{\theta \in \Theta} \pi(a|\theta) \mu_0(\theta) = 1$, namely, the total probability of useful signals (signals in $A\setminus\{a_0\}$) is 1.  Then, by the splitting lemma, the prior $\mu_0$ can be expressed as the convex combination 
\begin{equation*}
    \mu_0 = \sum_{a\in A\setminus\{a_0\}} \pi(a) \mu_a
\end{equation*}
where $\pi(a) = \sum_{\theta\in\Theta}\mu_0(\theta)\pi(a|\theta)$ is the unconditional probability of signal $a$ and $\mu_a$ is the true posterior induced by signal $a$.  Moreover, the indifference constraint \eqref{eq:optimal-scheme-LP-indifference} in the linear program ensures that the agent is indifferent between $a$ and $a_0$ upon receiving signal $a$ if the agent has bias level $\tau$: mathematically, $\tau \mu_0 + (1-\tau) \mu_a \in I_a$.  This means $\mu_a \in I_{a, \tau}$ by definition.  So, we obtain
\begin{equation*}
    \mu_0 \in \mathrm{ConvexHull} \bigg( ~ \bigcup_{a \in A\setminus\{a_0\}} I_{a, \tau} ~ \bigg).
\end{equation*}

\subsubsection{Proof of Parts 2 and 3 of \texorpdfstring{\cref{thm:geometric-characterization}}{Theorem \ref{thm:geometric-characterization}}}

We first prove that, if whether $w\ge\tau$ or $w\le \tau$ can be tested with finite sample complexity, then $I_{a, \tau} \ne \emptyset$ for at least one $a\in A\setminus\{a_0\}$. 

According to \cref{lemma:constant_is_optimal}, if we can test whether $w\ge\tau$ or $w\le\tau$ with finite sample complexity using adaptive algorithms, then we can do this using a constant signaling scheme. \cref{lem:adaptive-external-to-boundary} further ensures that we can do this using a constant signaling scheme $\pi$ with only boundary and internal signals.  But according to \cref{lem:internal-not-useful}, internal signals are not useful for testing $w\ge\tau$ or $w\le\tau$.  So, the signaling scheme $\pi$ must send some boundary signal $s$ with positive probability.  Let $\mu_s$ be the true posterior induced by $s$.  By the definition of boundary signal, $\tau \mu_0 + (1-\tau) \mu_s \in \partial R_{a_0}$, implying that the agent is indifferent between $a_0$ and some action $a\in A\setminus\{a_0\}$ if their belief is $\tau \mu_0 + (1-\tau) \mu_s$ (and $a_0$ and $a$ are better than any other actions).  This means $\tau \mu_0 + (1-\tau) \mu_s \in I_a$, so $\mu_s \in I_{a, \tau}$ by definition.  Hence, $I_{a, \tau} \ne \emptyset$. 

We then prove the opposite direction: if $I_{a, \tau} \ne \emptyset$ for at least one $a \in A\setminus\{a_0\}$, then whether $w\ge\tau$ or $w\le \tau$ can be tested with finite sample complexity. 

Let $a_1 \in A\setminus\{a_0\}$ be an action for which $I_{a_1, \tau} \ne \emptyset$.  We claim that: 
\begin{claim}
There exists a state $\theta_1 \in \Theta$ for which the agent weakly prefers action $a_1$ over action $a_0$ if the true posterior is state $\theta_1$ with probability 1 and the agent has bias level $\tau$.  In notation, let $e_{\theta_1} \in \Delta(\Theta)$ be the vector whose $\theta_1$th component is $1$ and other components are $0$.  The agent weakly prefers action $a_1$ over action $a_0$ under belief $\tau \mu_0 + (1-\tau) e_{\theta_1}$.
\end{claim}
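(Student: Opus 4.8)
The plan is to produce the state $\theta_1$ directly from the nonemptiness of $I_{a_1,\tau}$, by taking $\theta_1$ to be the state that \emph{minimizes} the utility-gap coordinate $c_{a_1,\theta} = U(a_0,\theta) - U(a_1,\theta)$, and then to verify the weak-preference inequality by a one-line expansion of the expected-utility comparison between $a_1$ and $a_0$.

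First I would invoke \cref{lem:I_a_tau_translation}: since $I_{a_1,\tau}\neq\emptyset$, there is some $\mu\in\Delta(\Theta)$ with $c_{a_1}^\top\mu = -\frac{\tau}{1-\tau}\,c_{a_1}^\top\mu_0$. (By \cref{assump:unique-default-action} we have $c_{a_1}^\top\mu_0>0$, so this common value is in fact negative, though the precise sign is not essential to the argument.) Since $c_{a_1}^\top\mu = \sum_{\theta}\mu(\theta)\,c_{a_1,\theta}$ is a convex combination of the coordinates $\{c_{a_1,\theta}\}_{\theta\in\Theta}$, the smallest coordinate is at most this average: picking $\theta_1\in\argmin_{\theta\in\Theta} c_{a_1,\theta}$, we obtain
\[
c_{a_1,\theta_1}\;=\;\min_{\theta\in\Theta} c_{a_1,\theta}\;\le\;c_{a_1}^\top\mu\;=\;-\frac{\tau}{1-\tau}\,c_{a_1}^\top\mu_0 .
\]

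Second I would check this $\theta_1$ works. Under the belief $\tau\mu_0 + (1-\tau)e_{\theta_1}$, the expected utility of $a_1$ minus that of $a_0$ equals $-\,c_{a_1}^\top\big(\tau\mu_0 + (1-\tau)e_{\theta_1}\big) = -\tau\,c_{a_1}^\top\mu_0 - (1-\tau)\,c_{a_1,\theta_1}$. The displayed inequality gives $(1-\tau)\,c_{a_1,\theta_1}\le -\tau\,c_{a_1}^\top\mu_0$, so this difference is nonnegative, i.e., an agent with bias level $\tau$ weakly prefers $a_1$ over $a_0$ under belief $\tau\mu_0 + (1-\tau)e_{\theta_1}$, which is exactly the claim.

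The argument is short, and the only mild subtlety — the thing to get right — is that one cannot take an arbitrary state with $c_{a_1,\theta}<0$: one must take the \emph{minimizing} state, so that the averaging bound transports the exact threshold $-\frac{\tau}{1-\tau}c_{a_1}^\top\mu_0$ from the set $I_{a_1,\tau}$ to the vertex $e_{\theta_1}$. Everything else is a routine expansion of the expected-utility inequality, with no further obstacle.
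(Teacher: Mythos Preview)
Your proof is correct. Both your argument and the paper's rest on the same linearity observation: the quantity $c_{a_1}^\top\big(\tau\mu_0 + (1-\tau)\mu\big)$ is an affine function of $\mu$ over the simplex, so its sign on all of $\Delta(\Theta)$ is controlled by its values at the vertices $e_\theta$. The paper runs this as a contradiction (if every vertex value is strictly positive, then every value is, hence $I_{a_1,\tau}=\emptyset$), whereas you run it constructively by picking $\theta_1\in\argmin_\theta c_{a_1,\theta}$ and using that the minimum coordinate is at most any convex combination. Your version has the small advantage of explicitly exhibiting the state $\theta_1$; the paper's contrapositive is marginally quicker since it avoids invoking \cref{lem:I_a_tau_translation} and the arithmetic with the threshold $-\frac{\tau}{1-\tau}c_{a_1}^\top\mu_0$. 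One minor remark on your closing comment: you do not strictly need the \emph{minimizing} state, only any state $\theta$ with $c_{a_1,\theta}\le -\frac{\tau}{1-\tau}c_{a_1}^\top\mu_0$; taking the argmin is simply the cleanest way to guarantee one exists.
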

\begin{proof}
Suppose on the contrary that no such state $\theta_1$ exists.  Then the agent strictly prefers $a_0$ over $a_1$ under belief $\tau \mu_0 + (1-\tau) e_{\theta}$ for every state $\theta \in \Theta$. This implies that, for any belief $\mu \in \Delta(\Theta)$, the agent should also strictly prefer $a_0$ over $a_1$ under the belief $\tau \mu_0 + (1-\tau) \mu$, due to linearity of the agent's utility with respect to the belief. The agent strictly preferring $a_0$ over $a_1$ implies $\tau \mu_0 + (1-\tau) \mu \notin I_a$, so $\mu$ cannot be in $I_{a, \tau}$ by definition.  This holds for any $\mu \in \Delta(\Theta)$, so $I_{a, \tau} = \emptyset$, a contradiction. 
\end{proof}
Let $\theta_1$ be the state in the above claim. The prior $\mu_0$ can be trivially written as the convex combination of $e_{\theta_1}$ and $e_{\theta}$ for other states $\theta$: 
\begin{equation*}
    \mu_0 = \mu_0(\theta_1) e_{\theta_1} + \sum_{\theta \in \Theta\setminus\{\theta_1\}} \mu_0(\theta) e_{\theta}. 
\end{equation*}
Since the agent does not prefer $a_0$ under belief $\tau \mu_0 + (1-\tau) e_{\theta_1}$, the belief $\tau \mu_0 + (1-\tau) e_{\theta_1}$ cannot be in the region $R_{a_0}$.  The prior $\mu_0$ is in the region $R_{a_0}$.  Consider the line segment connecting $e_{\theta_1}$ and the prior $\mu_0$.  There must exist a point $\mu' = t e_{\theta_1} + (1-t) \mu_0$ on the line segment such that the $\tau$-biased belief $\tau \mu_0 + (1-\tau) \mu'$ lies exactly on the boundary of $R_{a_0}$.  Clearly, the prior can also be written as a convex combination of $\mu'$ and $e_{\theta}$ for $\theta \in \Theta\setminus\{\theta_1\}$: 
\begin{equation*}
    \mu_0 = p' \mu' + \sum_{\theta \in \Theta\setminus\{\theta_1\}} p'_{\theta} e_{\theta}. 
\end{equation*}
Then by the splitting lemma (\cref{lem:splitting}), there exists a signaling scheme with $|\Theta|$ signals where one signal induces posterior $\mu'$ and the other signals induce posteriors $\{e_\theta\}_{\theta \in \Theta \setminus\{\theta_1\}}$.  In particular, the signal inducing $\mu'$ is a boundary signal since $\tau \mu_0 + (1-\tau) \mu' \in \partial R_{a_0}$ by construction. By \cref{lemma:boundary-signal}, that signal is useful for testing $w\ge\tau$ or $w\le\tau$. When that signal is sent (which happens with positive probability $p' > 0$ at each time step), we can tell $w\ge\tau$ or $w\le \tau$.  This finishes the proof.

The two directions proved above together prove the parts 2 and 3 of \cref{thm:geometric-characterization}.

\section{A More General Bias Model}
\label{app:general-bias-model}
We define a more general model of biased belief than the linear model.
The agent's bias is captured by some function $\bias: \Delta(\Theta) \times \Delta(\Theta) \times [0, 1] \to \Delta(\Theta)$.  Given prior $\mu_0 \in \Delta(\Theta)$, true posterior $\mu_s \in \Delta(\Theta)$, and bias level $w \in [0, 1]$, the agent has biased belief $\bias(\mu_0, \mu_s, w)$.  The linear model is the special case where $\bias(\mu_0, \mu_s, w) = w\mu_0 + (1-w)\mu_s$. 
We make the following natural assumptions on $\bias$: 
\begin{assumption} \label{assump:continuity}
\hfill 
\begin{itemize}
    \item $\bias(\mu_0, \mu_s, 0) = \mu_s$ (no bias), $\bias(\mu_0, \mu_s, 1) = \mu_0$ (full bias).
    \item $\bias(\mu_0, \mu_s, w)$ is continuous in $\mu_0, \mu_s, w$. 
\end{itemize}
\end{assumption}

We then make some joint assumptions on the bias model $\bias$ and the agent's utility function $U$.  Recall that the notation $R_{a_0} = \big\{ \mu \in\Delta(\Theta) \mid  \forall a \in A\setminus \{a_0\}, \, c_a^\top \mu > 0 \big\}$ is the region of beliefs under which the agent strictly prefers action $a_0$, $\partial R_{a_0}$ is the boundary of $R_{a_0}$, and $\ext R_{a_0}$ is the exterior of $R_{a_0}$ where the agent strictly not prefers $a_0$. 

\begin{assumption}\label{assump:bias-at-prior}
When receiving no information, the agent will take the default action: $\forall \mu_0 \in \Delta(\Theta), \forall w\in[0, 1]$,  $\phi(\mu_0, \mu_0, w) \in R_{a_0}$. 
\end{assumption}

\begin{definition}
We say that a posterior belief $\mu \in \Delta(\Theta)$ satisfies \textbf{single-crossing} if the curve $\{ \bias(\mu_0, \mu, w) : w \in [0, 1] \}$ passes the boundary $\partial R_{a_0}$ only once: namely, there exists $\overline{w} \in [0, 1]$ such that
\begin{equation*}
    \begin{cases}
          \forall w \in [0, \overline{w}), & \bias(\mu_0, \mu, w) \in \ext R_{a_0}; \\
         & \bias(\mu_0, \mu, \overline{w}) \in \partial R_{a_0}; \\
         \forall w \in (\overline{w}, 1], & \bias(\mu_0, \mu, w) \in R_{a_0}. 
    \end{cases} 
\end{equation*}
\end{definition}

We assume that all posteriors outside $R_{a_0}$ satisfy single-crossing, and all posteriors inside $R_{a_0}$ do not cross the boundary when the bias level varies in $[0, 1]$: 

\begin{assumption}
\hfill 
\begin{itemize}
    \item Any $\mu \notin R_{a_0}$ satisfies single-crossing.
    \item For any $\mu \in R_{a_0}$, any $w\in [0, 1]$, $\bias(\mu_0, \mu, w) \in R_{a_0}$. 
\end{itemize}
\end{assumption}

Under the above general bias model with the stated assumptions, our results regarding the optimality of constant signaling schemes (\cref{lemma:constant_is_optimal}) still holds.  The geometric characterization of testability of bias (\cref{thm:geometric-characterization}) holds after redefining some notations.  Let $I_a = \{\mu \in \Delta(\Theta) \mid c_a^\top \mu = 0\}$ still be the set of beliefs where the agent is indifferent between actions $a$ and $a_0$.  Let $I_{a, \tau}$ still be the set of posterior beliefs for which an agent with bias level $\tau$ will be indifferent between $a$ and $a_0$, but with a more general expression than the linear model: 
\begin{equation*}
    I_{a, \tau} := \{ \mu \in \Delta(\Theta) \mid \bias(\mu_0, \mu, \tau) \in I_a \}.
\end{equation*}
With the above definition of $I_{a, \tau}$, \cref{thm:geometric-characterization} still holds.  We omit the proofs because they are almost identical to the proofs for the linear model. 

The revelation principle (\cref{lem:revelation-principle}) and the linear program algorithm for computing the optimal signaling scheme (\cref{alg:sample-complexity-LP} and \cref{thm:linear-program}) do not apply to the general bias model because $\bias$ is not linear.  Designing an efficient algorithm to compute a good signaling scheme to test bias in a more general model is an interesting future direction. 

\end{document}